\documentclass[12pt]{article}
\usepackage[title]{appendix}
\usepackage{rotating, lscape}
\usepackage{float}
\usepackage{graphicx, color}
\usepackage{threeparttable, booktabs, longtable, threeparttablex,
	tabularx, multirow, pdflscape}

\usepackage[utf8]{inputenc}
\usepackage[T1]{fontenc}
\usepackage{lmodern}
\usepackage{microtype}
\usepackage{setspace}

\usepackage{amssymb,amsmath,mathrsfs,amsthm,bbm}
\usepackage{accents}

\usepackage{afterpage}
\usepackage[font=small,format=plain,labelfont=bf,textfont=it]{caption}
\usepackage{subcaption}
\usepackage{ragged2e}
\usepackage{chngcntr}
\usepackage{enumerate}
\usepackage{url}
\AtBeginDocument{}

\usepackage{tkz-graph}
\usetikzlibrary{shapes.geometric,shapes,decorations,arrows,calc,
	arrows.meta,fit,positioning}
\tikzset{
	-Latex,auto,node distance =1 cm and 1 cm,semithick,
	state/.style ={ellipse, draw, minimum width = 0.7 cm},
	point/.style = {circle, draw, inner sep=0.04cm,fill,node contents={}},
	bidirected/.style={Latex-Latex,dashed},
	el/.style = {inner sep=2pt, align=left, sloped}
}

\usepackage[natbibapa]{apacite}

\usepackage{xcolor}
\usepackage{hyperref}
\hypersetup{
	colorlinks = true,
	urlcolor   = blue,
	linkcolor  = blue,
	citecolor  = blue
}

\usepackage[hmargin=1in,vmargin=1in]{geometry}
\allowdisplaybreaks

\theoremstyle{plain}
\newtheorem{proposition}{Proposition}

\newtheorem{example}{Example}

\theoremstyle{definition}
\newtheorem{assumption}{Assumption}

\title{When Do Treatment Changes Identify Causal Effects?}
\author{Martin Huber\\
\small {University of Fribourg, Department of Economics} \smallskip }

\begin{document}

	\maketitle
	\thispagestyle{empty}
	\begin{abstract}
		\singlespacing
This paper clarifies the identifying assumptions underlying causal inference based on treatment changes rather than levels, and their relationship to conventional identification strategies. We characterize two structural models, with non-nested assumptions, under which treatment-change identification is valid conditional on observed covariates by differencing out time-constant confounders that are additive in the treatment equation. The assumptions underlying treatment changes are generally not nested with those of methods relying on treatment levels, such as selection-on-observables strategies that control for past outcomes, treatments, and covariates, or difference-in-differences approaches that difference outcomes rather than treatments over time. We show, however, that under a random-walk restriction on the treatment process, exploiting treatment changes for identification is equivalent to using treatment levels given lagged treatment. This and other equivalence results motivate overidentification tests based on methods considering treatment levels and changes. Under an alternative model that does not assume a random walk but instead rules out dynamic treatment effects (among other conditions), treatment changes can still be used as an instrument to identify a treatment effect that is constant given covariates. However, without random walk, different identification strategies are generally not nested. In partially linear models, the non-nesting results carry a double robustness implication for two-way fixed effects regression that differences both the outcome and the treatment over time, which under certain conditions remains consistent if either the treatment-change assumption or the parallel-trends assumption holds. We characterize the causal models consistent with each method, run simulations for illustration, and present an empirical application to cigarette demand.
	\end{abstract}

{\footnotesize
	\noindent \textbf{Keywords:} treatment effects, treatment changes, conditional independence, instrumental variables, overidentification.\\[0.3ex]
	\noindent \textbf{JEL Classification:} C12, C14, C21, C23.\\[0.3ex]
	\noindent \textbf{Address for correspondence:} Martin Huber, University of Fribourg, Bd.\ de P\'{e}rolles 90, 1700 Fribourg, Switzerland; martin.huber@unifr.ch.\\[0.3ex]
	\noindent \textbf{Acknowledgements:} The author is indebted to Michael Knaus and Henri Pfleiderer for valuable comments that helped to correct and substantially revise an earlier version of the draft. The author used the AI assistant Claude (developed by Anthropic) as an aid for scientific writing and coding. Specifically, it was used to rephrase and improve author-written text and to optimize code.}

	\pagebreak
	\setcounter{page}{1}
	\normalsize
	\onehalfspacing
	\section{Introduction}\label{intro}

Identifying causal effects in observational data remains a central challenge in empirical research. In the absence of random assignment, applied work frequently relies either on selection-on-observables assumptions, which require treatment assignment (such as a medical treatment dose or participation in a training program) to be independent of potential outcomes conditional on observed covariates, or on difference-in-differences (DiD) designs, which exploit repeated observations over time and assume parallel trends in potential outcomes, possibly conditional on covariates. Both approaches impose identifying assumptions with respect to the treatment level that are non-nested, as discussed in \citet{WebervanderLaanPetersen2015}, \citet{ChabeFerret2017}, \citet{xu2023causal}, and \citet{huber2024joint}.

There exists a third practice that is not uncommon in applied work but rarely made theoretically explicit: assessing the effect of \emph{changes} in a dynamically evolving treatment rather than of specific treatment levels. Examples include studies of tax reforms that change marginal rates over time, labor market regulations that adjust minimum wages, education policies that shift compulsory schooling ages, and medical dosage protocols that modify treatment intensity. In such settings, researchers often implicitly assume that treatment changes rather than levels are quasi-randomly assigned conditional on observed covariates, without articulating precisely what structural conditions this requires or how it relates to conventional identification strategies. A common example is recoding the treatment as one in the case of a treatment increase and as zero if the treatment remains constant over time. \citet{CardKrueger1994}, for instance, apparently a classical DiD study, define the treatment group as observations experiencing an increase in the minimum wage (in the US state of New Jersey) and the control group as those where the minimum wage remains constant but is nonzero (in Pennsylvania).

This paper provides a systematic analysis of identification based on treatment changes, aiming at differencing out time-constant confounders jointly affecting the outcome and the treatment from the treatment equation. We ask under what conditions on the underlying treatment and outcome processes the change in a unit's treatment from one period to the next can be treated as if it were randomly assigned, conditional on observed covariates, and what this requires beyond simply removing time-constant confounders. The treatment level of a unit in a specific time period consists of the treatment change added to the unit's treatment level in the previous period, and this previous level is itself a consequence of the unit's history, possibly related to the same unobserved factors that determine the outcome. Consequently, even after differencing out time-constant unobserved confounders from the treatment equation, the treatment change may still be related to the outcome indirectly, through the unit's prior treatment level.

We characterize two structural models that resolve this. In Model A, the part of each period's treatment that is not explained by observed covariates and a time-constant confounder follows a random walk: each period's treatment shock is a fresh draw, unrelated to the entire history of past shocks and to the time-constant confounder. The treatment change therefore is unrelated to the unit's prior treatment level conditional on covariates, which permits the identification of heterogeneous treatment effects without strong restrictions on the outcome process, allowing e.g.\ for dynamic effects of earlier treatment levels on the outcome. Model B retains the same additive role for the time-constant confounder, so that it is removed by differencing the treatment just as in Model A, but drops the random-walk restriction on the time-varying part of the process, leaving its evolution otherwise unrestricted, and instead rules out dynamic treatment effects in the outcome equation. We show that this combination does not deliver the conditional independence assumption needed to identify treatment-change effects in general, since the treatment change can still be correlated with the unit's prior treatment level. Models A and B impose genuinely different and non-nested restrictions, and we show that they generally do not identify the same object.

Although Model B does not deliver the treatment-change conditional independence assumption, it is not without content. In an outcome model that is linear in the treatment level with a coefficient that is homogeneous given covariates, Model B's restrictions suffice for the treatment change to be uncorrelated with the part of the outcome left unexplained by the treatment level. This permits applying an instrumental-variables strategy, see e.g.\ \citet{Wright1928} and \citet{Imbens+94}, with the treatment change serving as the instrument and the treatment effect recovered as a ratio of two coefficients of a reduced-form and first-stage regression. Once the treatment effect is heterogeneous across treatment levels, the same ratio continues to identify a weighted average of the marginal response, in the spirit of analogous results in the instrumental-variables literature on treatment effect heterogeneity, see \citet{AngristImbens95}. However, it generally does not identify the same conditional average marginal effect targeted directly by selection-on-observables or DiD under more general models permitting effect heterogeneity.

We compare the conditions delivered by Models A and B to those underlying two conventional identification strategies that operate on treatment levels: selection-on-observables conditional on the lagged treatment, lagged outcome, and covariates, and DiD conditional on the lagged treatment and covariates but not the lagged outcome. Under Model A, all three identification strategies coincide and are informationally equivalent, since they all exploit the same underlying treatment shock. This equivalence motivates the construction of \citet{Hausman1978}-type overidentification tests by jointly considering methods based on treatment levels and treatment changes. Absent the structural assumptions of Model A, however, the linear regression coefficients these strategies rely on in practice need not coincide, as the corresponding identifying assumptions are generally non-nested with one another.

Beyond these overidentification tests, the non-nesting results carry a double-robustness (DR) implication, see \citet{Robins+94}, for an estimator that relies on more than one identification strategy at once. Consider DiD and the treatment-change strategy, which aim to difference out the time-constant confounder from the outcome and the treatment equations, respectively. We show that, under specific structural conditions, the regression coefficient of the differenced outcome on the differenced treatment, as computed by the classical two-way fixed effects (TWFE) estimator, recovers the treatment effect if the time-constant confounder enters the treatment equation, or the outcome equation, in a way that cancels exactly under differencing, without requiring this for both equations simultaneously. This double robustness does not extend to models with treatment effects that are heterogeneous in the treatment level or in time. This structural notion of double robustness, which operates by combining two restrictions on the data-generating process, is related to the design-based and model-based notions of double robustness in \citet{arkhangelsky2022doubly} and \citet{arkhangelsky2021double}.

We investigate the finite-sample behavior of the different identification strategies in a simulation study calibrated to a panel setting, confirming the non-nesting results, the rescaling requirement for the instrumental-variables estimator, and the double robustness of TWFE. As an empirical illustration, we estimate the price elasticity of cigarette consumption using the state-level panel of \citet{baltagi1994} covering 46 US states over the period 1963--1992. Applying the different identification strategies and the associated Hausman overidentification tests, we find a substantial gap between the estimate based on treatment changes and the estimates based on treatment levels and TWFE.

The remainder of the paper is organized as follows. Section~\ref{setup} introduces the causal framework and the estimand of interest. Section~\ref{strucmod} presents Model A and Model B, the conditions each requires, and the relationship between them. Section~\ref{sec:comparison} compares the resulting conditions to those underlying DiD and selection-on-observables strategies. Section~\ref{sec:overid} discusses overidentification tests and the double-robustness result for TWFE. Section~\ref{sim} presents a simulation study and Section~\ref{app} an empirical application to the price elasticity of cigarette demand. Section~\ref{conc} concludes.
	\section{Causal Framework and Estimands}\label{setup}

	Most identification strategies for panel data or repeated cross-sections rely on differencing outcomes over time to remove unobserved time-constant confounders, also known as fixed effects; see e.g.\ the extensive difference-in-differences (DiD) literature \citep{Snow1855, Ashenfelter78}. This paper considers an alternative identification strategy based on treatment changes over time. Let $Y_T$ and $D_T$ denote the outcome and the treatment in period $T$, respectively. The treatment $D_T$ is presumably assigned at the beginning of period $T$, while the outcome $Y_T$ is measured at the end of that period. Let $X_T$ denote a vector of covariates observed in period $T$, also measured later in that period after treatment assignment $D_T$, which may contain both time-invariant and time-varying variables. For any variable $A$, let $\bar{A}_{T}=\{A_1,\dots,A_T\}$ denote its history up to period $T\ge 2$. Throughout, capital letters denote random variables, while lowercase letters denote their realizations.

	To define our causal effect of interest (or estimand), we employ the potential outcomes framework \citep{Neyman23, Rubin74}. Standard potential outcomes are defined with respect to treatment levels: $Y_T(d_T)$ denotes the potential outcome at time $T$ under treatment level $d_T$ in that period. In contrast, we are interested in the causal effect of a change in treatment. For a treatment change value $\nabla d$, we consider the potential outcome $Y_T(D_{T-1}+\nabla d_T)$, where $\nabla d_T$ is a specific value of the treatment increment $\nabla D_T = D_T - D_{T-1}$. For notational convenience, we henceforth drop the time index and write the treatment increment as $\nabla d$. We focus on the average effect of a treatment change of size $\nabla d$ relative to no change in period $T=t$, denoted by $\Delta_t(\nabla d)$ and defined as
	\begin{align}\label{eff1}
		\Delta_t(\nabla d) = E\left[Y_t(D_{t-1}+\nabla d) - Y_t(D_{t-1})\right],
	\end{align}
	and more generally, on the average effect of two arbitrary treatment changes $\nabla d$ and $\nabla d'$ in period $T=t$, denoted by $\Delta_t(\nabla d, \nabla d')$ and defined as
	\begin{align}\label{eff2}
		\Delta_t(\nabla d, \nabla d')
		= E\left[Y_t(D_{t-1}+\nabla d) - Y_t(D_{t-1}+\nabla d')\right].
	\end{align}

	A crucial feature of this definition, and one that will be central to the analysis below, is that the potential outcome $Y_T(D_{T-1}+\nabla d)$ is evaluated at the unit's own, realized baseline treatment level $D_{T-1}$. This is what distinguishes the treatment-change estimand from a conventional treatment-level estimand $Y_T(d)$, which fixes $d$ independently of the unit. It is also worth noting that the same treatment increment may be realized under different baseline treatment levels across units, so treatment effects indexed by increments do not generally correspond to the effects of uniquely defined treatment levels. Such effects of treatment changes are nonetheless of direct policy relevance in settings where the treatment is dynamically adjusted, such as changes in tax rates, adjustments in public expenditures, or gradual modifications of medical dosages. Throughout we impose SUTVA \citep{Rubin80, Cox58} and rule out anticipation effects as is commonly assumed in DiD studies \citep{Lechner2010}.

	The central identifying assumption we study throughout the paper is the following conditional independence assumption, which requires the treatment change to be independent of the treatment-change potential outcome conditional on past covariates.
	\begin{assumption}\label{CIAdD}\text{(CIA-$\nabla D$)}\\
		$Y_T(D_{T-1}+\nabla d) \perp \nabla D_T \mid \bar{X}_{T-1}$ for all
		$\nabla d$ in the support of $\nabla D_T$.
	\end{assumption}
	Assumption~\ref{CIAdD} states that, after conditioning on the history of covariates observed prior to the treatment change, the treatment change $\nabla D_T$ is as good as randomly assigned. While the level of the treatment $D_T$ may be endogenous, identification hinges solely on the exogeneity of the treatment increment $\nabla D_T$ \emph{with respect to the unit's own treatment-change potential outcome}. We condition on $\bar{X}_{T-1}$ rather than contemporaneous covariates $X_T$ because the latter may include variables influenced by the treatment assignment in period $T$. Conditioning on post-treatment variables generally invalidates identification by introducing selection on outcomes, as e.g. discussed in \citet{angrist2009mostly}.

	As a further remark, note that if the lagged treatment $D_{T-1}$ is constant across units, then Assumption~\ref{CIAdD} coincides with the conventional selection-on-observables (or unconfoundedness) assumption for treatment evaluation formulated in terms of treatment levels, $Y_T(d) \perp D_T \mid \bar X_{T-1}$, see, for example, \citet{Im04}. A leading case is one in which no unit is treated in the earlier period, so that $D_{T-1}=0$ for all units and $\nabla D_T = D_T-0=D_T$. In this case, treatment changes and treatment levels coincide, and the baseline-dependence issue discussed in the introduction does not arise, precisely because there is no variation in the baseline to begin with. Therefore, formulating identifying assumptions in terms of treatment changes rather than levels, and confronting the baseline-dependence problem that this formulation introduces, is conceptually meaningful only when there is variation in $D_{T-1}$ across units.

	In addition to Assumption~\ref{CIAdD}, identification requires a common support condition, ensuring sufficient overlap in treatment changes across covariate histories.
	\begin{assumption}\label{comsup}\text{(Common Support)}\\
		$\Pr(\nabla D_T=\nabla d \mid \bar{X}_{T-1}) > 0$ for all $\nabla d$
		in the support of $\nabla D_T$.
	\end{assumption}
	Assumption~\ref{comsup} guarantees that each treatment change of interest occurs with positive probability for all relevant values of the conditioning variables, ensuring that the causal effects of treatment increments are empirically identifiable. While this condition is stated for discrete treatment changes, in the case of continuously distributed treatment changes it should be interpreted in terms of the conditional density: the density of $\nabla D_T$ given $\bar{X}_{T-1}$ must be bounded away from zero on its support, rather than requiring strictly positive probability mass at a point.

	Under Assumptions~\ref{CIAdD} and~\ref{comsup}, both the average treatment effect (ATE) of a treatment change and the conditional average treatment effect (CATE) given $\bar{X}_{T-1}$ are identified:
	\begin{align}
		\Delta_T(\nabla d, \nabla d') &= E\Big[ E[Y_T(D_{T-1}+\nabla d)\mid \bar{X}_{T-1}] - E[Y_T(D_{T-1}+\nabla d')\mid \bar{X}_{T-1}] \Big] \\ &= E\Big[ E[Y_T \mid \nabla D_T=\nabla d,\bar{X}_{T-1}] - E[Y_T \mid \nabla D_T=\nabla d',\bar{X}_{T-1}] \Big],\notag
	\end{align}
	where the first equality follows from the law of iterated expectations, and the second follows from Assumption~\ref{CIAdD}. Assumption~\ref{comsup} ensures that subpopulations with $\nabla d$ and $\nabla d'$ exist conditional on $\bar{X}_{T-1}$, so that the conditional expectations are well-defined. Analogous arguments can establish identification of distributional effects, such as quantile treatment effects, as considered for instance in \citet{Firpo03}.

	The remainder of the paper investigates under what structural conditions on the joint evolution of treatment, covariates, and unobservables Assumption~\ref{CIAdD} can be justified, since the assumption as stated is a high-level condition whose plausibility is not transparent without reference to an underlying data-generating process.
	\section{Structural Models for Treatment Changes}\label{strucmod}

	Assumption~\ref{CIAdD} is stated at a high level, and its plausibility is not transparent without reference to an underlying data-generating process. To understand the types of structural models in which this assumption is satisfied or violated, consider the following general equations for the outcome and treatment:
	\begin{align}\label{strucmodels}
		Y_T &= \mathcal{F}_Y \left(T, \bar{D}_{T}, \bar{X}_{T-1},
		\bar{V}_T, U\right), \notag\\
		D_T &= \mathcal{F}_D \left(T, \bar{D}_{T-1}, \bar{X}_{T-1},
		\bar{W}_T, U\right),
	\end{align}
	where $V_T$ and $W_T$ are time-varying unobservables and $U$ is a time-invariant unobservable or fixed effect. In general, Assumption~\ref{CIAdD} will not hold in model~\eqref{strucmodels} without further constraints. To see this, note that
	\begin{align}\label{genmod}
		\nabla D_T
		= \mathcal{F}_D(T,\bar{D}_{T-1},\bar{X}_{T-1},\bar{W}_T,U)
		- \mathcal{F}_D(T-1,\bar{D}_{T-2},\bar{X}_{T-2},\bar{W}_{T-1},U).
	\end{align}
	In nonlinear models, the fixed effect $U$ will typically not cancel, as its effect may interact with time. Consequently, $U$ remains a potential confounder that jointly affects $\nabla D_T$ and $Y_T$. Further violations of Assumption~\ref{CIAdD} can arise in several ways. First, if the time-varying unobservables $V_T$ and $W_T$ are statistically dependent, then $\nabla D_T$ and $Y_T$ may be statistically associated even after conditioning on past covariates. Second, if past treatments directly affect future outcomes --- i.e., there are dynamic treatment effects as e.g. considered in \citet{Ro86} and \citet{RoHeBr00} --- and also influence future treatments, then past treatments themselves become confounders of treatment changes and outcomes. In this case, conditioning on covariates alone may be insufficient for identification.

	A further point deserves emphasis at the outset, since it shapes everything that follows. The treatment change $\nabla D_T$ enters our analysis through the potential outcome $Y_T(D_{T-1}+\nabla d)$: the change of size $\nabla d$ is added on top of the treatment level $D_{T-1}$ that the unit actually had in the previous period. This previous treatment level $D_{T-1}$ is itself a random variable, determined by the unit's own history through equation~\eqref{strucmodels} at $T-1$, and it may depend on the same unobserved factors --- the fixed effect $U$ or the time-varying shocks $W_{T-1}$ --- that also affect the outcome. Removing $U$ from the treatment change $\nabla D_T$, as in~\eqref{genmod}, addresses only part of the problem: even after $U$ cancels, $\nabla D_T$ may still be informative about $D_{T-1}$, and through $D_{T-1}$ about the unobservables that drove the treatment process in earlier periods, which the potential outcome continues to depend on. Any structural restriction that delivers Assumption~\ref{CIAdD} must therefore ensure that the treatment change carries no information about the unit's prior treatment level, not only that it is free of the time-constant confounder.

	We now characterize two structural models. The first, Model A, restricts the time-varying part of the treatment process to follow a random walk. We show that this is sufficient to sever the link between the treatment change and the unit's prior treatment level entirely, delivering Assumption~\ref{CIAdD} for an unrestricted, dynamic, and heterogeneous treatment effect model. The second, Model B, retains only the restriction that the time-constant confounder enters the treatment equation additively, without any restriction on how the time-varying shock evolves over time, and additionally rules out dynamic treatment effects. We show that Model B does not deliver Assumption~\ref{CIAdD}: the link between the treatment change and the unit's prior treatment level remains open. Models A and B are not nested --- neither set of restrictions implies the other --- and we show that they do not, in general, identify the same object. Model B nonetheless delivers a related and useful condition: an instrumental-variables-type exogeneity condition that identifies a treatment effect that is constant given covariates, through the same logic as a standard instrumental-variables estimator.

	Model A restricts the treatment equation so that both the time-constant and time-varying unobservables are additively separable,
	\begin{align}\label{restrictedmodel}
		D_T = \mathcal{F}_D\left(T, \bar{X}_{T-1}\right) + U + W_T,
	\end{align}
	where $U$ is a scalar, time-constant unobservable or fixed effect, and $W_T$ is the time-varying unobservable specific to period $T$. We further assume that $W_T$ follows a random walk, implying that the treatment level in some period $T$ corresponds to the treatment level in the previous period $T-1$ plus a quasi-random shock that fully explains the change over time:
	\begin{align}\label{randomwalk}
		W_T = W_{T-1} + \varepsilon_T, \quad
		\varepsilon_T \perp \bar{W}_{T-1} \mid \bar{X}_{T-1},
	\end{align}
	where $\varepsilon_T$ is the period-$T$ shock to treatment. The independence in~\eqref{randomwalk} is required with respect to the \emph{entire} history $\bar{W}_{T-1}=\{W_1,\dots,W_{T-1}\}$, and not merely the immediately preceding shock $W_{T-1}$. This full-history property is what will let us show that the treatment change carries no information about the unit's prior treatment level, and it is correspondingly the feature that fails under Model B, introduced below.

	Under the treatment model restrictions in equations \eqref{restrictedmodel} and \eqref{randomwalk}, Assumption~\ref{CIAdD} holds if the time-specific shock $\varepsilon_T$ in the treatment equation is independent of the outcome-relevant unobservables, the time-invariant fixed effect, and the entire history of past treatment shocks, conditional on past covariates:
	\begin{align}\label{condind1}
		\{U, \bar V_T, \bar{W}_{T-1}\} \perp \varepsilon_T \mid \bar{X}_{T-1}.
	\end{align}
	To see this, note that under \eqref{restrictedmodel} the treatment change satisfies
	\begin{align}\label{treatsimp}
		\nabla D_T =& \mathcal{F}_D(T, \bar{X}_{T-1}) +W_{T}-\mathcal{F}_D(T-1, \bar{X}_{T-2})-W_{T-1},\notag\\ =& \mathcal{F}_D(T, \bar{X}_{T-1}) +\varepsilon_T-\mathcal{F}_D(T-1, \bar{X}_{T-2}),
	\end{align}
	where the first equality follows from cancellation of the fixed effect $U$, and the second from the random-walk specification \eqref{randomwalk}. Since $T$ is deterministic, the only stochastic element in the treatment change \eqref{treatsimp} conditional on $\bar{X}_{T-1}$ is the shock $\varepsilon_T$. Therefore, imposing \eqref{condind1} immediately gives
	\begin{align} \label{condind2}
		\{ \bar V_T, U, \bar{D}_{T-1} \} \perp \nabla D_T \mid \bar{X}_{T-1},
	\end{align}
	where the inclusion of $\bar{D}_{T-1}$ follows because $\bar{D}_{T-1}$ is a deterministic function of $(U,\bar{W}_{T-1},\bar{X}_{T-2})$, and \eqref{condind1} makes $\varepsilon_T$ independent of both $U$ and $\bar{W}_{T-1}$ --- the entire history that determines $\bar{D}_{T-1}$ --- given $\bar{X}_{T-1}$. This is the step that severs the link between $\nabla D_T$ and the unit's prior treatment level: under condition~\eqref{condind1}, $D_{T-1}$ itself is rendered independent of $\nabla D_T$, not merely the fixed effect $U$.

	Next, consider the potential outcome based on the unrestricted outcome model in \eqref{strucmodels} under a treatment change $\nabla d$:
	\begin{align}
		Y_T(\nabla d)
		= \mathcal{F}_Y\left(
		T, \bar D_{T-1}, D_{T-1} + \nabla d, \bar X_{T-1}, \bar V_T, U
		\right).
	\end{align}
	Conditional on $\bar X_{T-1}$, the stochastic elements in $Y_T(\nabla d)$ are $\bar D_{T-1}, \bar V_T$, and $U$ --- and, crucially, $\bar D_{T-1}$ enters explicitly, since the potential outcome is evaluated at the unit's own prior-level-plus-change $D_{T-1}+\nabla d$ rather than at a level fixed independently of the unit. By \eqref{condind2}, $\bar D_{T-1}, \bar V_T, U$ are jointly independent of $\nabla D_T$ conditional on $\bar X_{T-1}$. It follows that, for all $\nabla d$, $Y_T(\nabla d) \perp \nabla D_T \mid \bar X_{T-1}$, which is Assumption~\ref{CIAdD}. This result is formalized in Proposition~\ref{prop:modelA}, which permits an unrestricted, dynamic, and heterogeneous treatment effect model.

	\begin{proposition}\label{prop:modelA}(Identification under Model A).\\
		Under the treatment equation~\eqref{restrictedmodel}, the random walk with the full-history property~\eqref{randomwalk}, the unrestricted outcome equation in~\eqref{strucmodels}, and condition~\eqref{condind1}, Assumption~\ref{CIAdD} holds.
	\end{proposition}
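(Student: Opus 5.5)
The argument runs in three steps: reduce $\nabla D_T$ to a function of the shock $\varepsilon_T$ and the covariates, establish the joint independence~\eqref{condind2}, and then transfer that independence to the potential outcome, which is a measurable function of the variables in~\eqref{condind2}. Throughout we condition on $\bar X_{T-1}=\bar x_{T-1}$, so $\mathcal{F}_D(T,\bar x_{T-1})$ and $\mathcal{F}_D(T-1,\bar x_{T-2})$ are constants. Note that $\bar X_{T-2}\subset \bar X_{T-1}$.

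\emph{Step 1.} Differencing~\eqref{restrictedmodel} between $T$ and $T-1$ cancels $U$. Substituting the random walk~\eqref{randomwalk} then gives~\eqref{treatsimp}. Hence, conditional on $\bar X_{T-1}$,
\[
\nabla D_T = c(\bar X_{T-1}) + \varepsilon_T
\]
for a deterministic $c$. This is a one-to-one measurable map of $\varepsilon_T$.

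\emph{Step 2.} For every $s\le T-1$, equation~\eqref{restrictedmodel} writes $D_s$ as a deterministic function of $(\bar X_{s-1},U,W_s)$. So $\bar D_{T-1}=g(\bar X_{T-2},U,\bar W_{T-1})$ for a measurable $g$. Because this holds without recursion on past $D$'s, no structure of $\mathcal{F}_D$ beyond~\eqref{restrictedmodel} is needed. Condition~\eqref{condind1} gives $(U,\bar V_T,\bar W_{T-1})\perp \varepsilon_T\mid \bar X_{T-1}$. Given $\bar X_{T-1}$, the vector $(\bar V_T,U,\bar D_{T-1})$ is a measurable function of $(U,\bar V_T,\bar W_{T-1})$, and $\nabla D_T$ is a measurable function of $\varepsilon_T$. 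Conditional independence is preserved under measurable transformations of each block, which yields~\eqref{condind2}.

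\emph{Step 3.} Fix any $\nabla d$ in the support. From the unrestricted outcome equation in~\eqref{strucmodels},
\[
Y_T(D_{T-1}+\nabla d)=\mathcal{F}_Y(T,\bar D_{T-1},D_{T-1}+\nabla d,\bar X_{T-1},\bar V_T,U).
\]
Given $\bar X_{T-1}$, this is a measurable function $h_{\nabla d}(\bar D_{T-1},\bar V_T,U)$, with $\nabla d$ a fixed constant. By~\eqref{condind2} and the same preservation property, $Y_T(D_{T-1}+\nabla d)\perp\nabla D_T\mid\bar X_{T-1}$, which is Assumption~\ref{CIAdD}. The result requires no restriction on $\mathcal{F}_Y$, so dynamic and heterogeneous effects are allowed.

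The main obstacle is the bookkeeping in Step 2. The potential outcome is evaluated at the unit's own realized $D_{T-1}$, and $D_{T-1}$ depends on $U$ and on the entire history $\bar W_{T-1}$. The independence of $\varepsilon_T$ from only $W_{T-1}$ would therefore not suffice; the full-history clause in~\eqref{condind1} and~\eqref{randomwalk} is what is needed. A second technical point is that conditioning in~\eqref{condind1} is on $\bar X_{T-1}$ while $g$ uses $\bar X_{T-2}$. This is harmless because $\bar X_{T-2}$ is contained in the conditioning set. One must also make sure that no covariate determined after $\varepsilon_T$ enters $g$ or $h_{\nabla d}$, which holds because both $\mathcal{F}_D$ and $\mathcal{F}_Y$ depend only on covariates dated $T-1$ or earlier.
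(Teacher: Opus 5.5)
Your proposal is correct and follows essentially the same route as the paper: cancel $U$ and use the random walk to reduce $\nabla D_T$ to a shift of $\varepsilon_T$ given $\bar X_{T-1}$, then obtain condition~\eqref{condind2} from~\eqref{condind1} by writing $\bar D_{T-1}$ as a function of $(U,\bar W_{T-1},\bar X_{T-2})$, and finally transfer this independence to the potential outcome, which is a function of $(\bar D_{T-1},\bar V_T,U)$ given $\bar X_{T-1}$. Your explicit appeal to preservation of conditional independence under measurable maps (which may also depend on the conditioning variables) makes rigorous the step the paper states informally as ``the only stochastic element.''
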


	Figure~\ref{dag1} illustrates the causal structure underlying Model A using a directed acyclic graph (DAG); see, for instance, \citet{Pearl00} for background discussion. Solid nodes represent observed variables, dashed nodes represent unobserved variables, and arrows denote causal effects between variables. The left panel shows the model formulated in terms of treatment levels $D_0$ and $D_1$, whereas the right panel depicts the corresponding model for the treatment change $\nabla D_1 = D_1 - D_0$, with $D_1$ included explicitly as a node, fed by both $D_0$ and $\nabla D_1$, since $D_1 = D_0 + \nabla D_1$ is an identity. Both panels also include a direct arrow from $D_0$ to $Y_1$, reflecting that Model A's outcome equation permits a dynamic effect of the prior treatment level on the outcome beyond what is mediated by $D_1$.

	Conditional on $\bar{X}_0=\{X_{-1},X_0\}$, there are no variables that jointly affect $\nabla D_1$ and $Y_1$: in particular, neither the direct path from $D_0$ to $Y_1$ nor the path through $D_1$ confounds $\nabla D_1$ and $Y_1$, because under the full-history property~\eqref{randomwalk}, $\varepsilon_1$ --- the only stochastic driver of $\nabla D_1$ --- is independent of $D_0$ given $X_0$, so neither edge into $Y_1$ opens a back-door path into $\nabla D_1$. Model B, illustrated in Figure~\ref{dag2} below, instead rules out dynamic treatment effects altogether, so that the direct edge from $D_0$ to $Y_1$ is absent there and $D_0$ and $\nabla D_1$ affect $Y_1$ only through the mediator $D_1$. The analogous confounding nonetheless arises there because $\nabla D_1$ is correlated with $D_0$.

\begin{figure}[!htp]
	\centering
	\caption{\label{dag1} Model A: DAG illustration of
		Proposition~\ref{prop:modelA}. Left: treatment levels. Right: treatment
		change. \bigskip}
	\begin{tikzpicture}[scale=0.8, transform shape]
		\node[state] (D0) at (0,0) {$D_0$};
		\node[state] (D)  [right =of D0] {$D_1$};
		\node[state] (Y1) [right =of D]  {$Y_1$};
		\node[state, dashed] (W0) [above =of D0] {$W_0$};
		\node[state, dashed] (W1) [above =of D]  {$W_1$};
		\node[state, dashed] (V1) [below =of D]  {$V_1$};
		\node[state, dashed] (U)  [below =of V1] {$U$};
		\node[state] (X0)  [below =of D0] {$X_0$};
		\node[state, dashed] (V0) [below =of X0] {$V_0$};
		\node[state] (Xm1) [left  =of X0] {$X_{-1}$};
		\path (D) edge (Y1); \path (U) edge (D0); \path (U) edge (Y1);
		\path (U) edge (X0); \path (U) edge[bend right=25] (D);
		\path (X0) edge (D); \path (D0) edge (X0);
		\path (D0) edge[bend left=30] (Y1);
		\path (X0) edge (Y1); \path (V1) edge (Y1); \path (X0) edge (V1);
		\path (W0) edge (D0); \path (W0) edge (W1); \path (W1) edge (D);
		\path (V0) edge (X0); \path (Xm1) edge (X0); \path (Xm1) edge (Y1);
		\path (Xm1) edge (D0); \path (Xm1) edge (V0); \path (Xm1) edge (D);
		\path (V0) edge (V1); \path (U) edge (Xm1);
	\end{tikzpicture}
	\quad
	\begin{tikzpicture}[scale=0.8, transform shape]
		\node[state] (D0) at (0,0) {$D_0$};
		\node[state] (D)  [right =of D0] {$\nabla D_1$};
		\node[state] (D1) [right =of D]  {$D_1$};
		\node[state] (Y1) [right =of D1] {$Y_1$};
		\node[state, dashed] (W0) [above =of D0] {$W_0$};
		\node[state, dashed] (W1) [above =of D]  {$\varepsilon_1$};
		\node[state, dashed] (V1) [below =of D1] {$V_1$};
		\node[state, dashed] (U)  [below =of V1] {$U$};
		\node[state] (X0)  [below =of D0] {$X_0$};
		\node[state, dashed] (V0) [below =of X0] {$V_0$};
		\node[state] (Xm1) [left  =of X0] {$X_{-1}$};
		\path (D1) edge (Y1); \path (U) edge (D0); \path (U) edge (Y1);
		\path (U) edge (X0); \path (X0) edge (D); \path (D0) edge (X0);
		\path (D0) edge[bend left=35] (Y1);
		\path (X0) edge (Y1);
		\path (V1) edge (Y1); \path (X0) edge (V1); \path (W0) edge (D0);
		\path (W1) edge (D); \path (V0) edge (X0);
		\path (D) edge (D1); \path (D0) edge[bend left=30] (D1);
		\path (Xm1) edge (X0); \path (Xm1) edge (Y1); \path (Xm1) edge (D0);
		\path (Xm1) edge (V0); \path (Xm1) edge (D); \path (V0) edge (V1);
		\path (U) edge (Xm1);
	\end{tikzpicture}
\end{figure}
	It is natural to ask whether the random-walk restriction~\eqref{randomwalk} can be dispensed with, which is considered in Model B, introduced in the following discussion. Consider a weaker treatment equation that retains additive separability of the time-invariant unobservable, but otherwise leaves the evolution of the time-varying shock $W_T$ unrestricted:
	\begin{align}\label{restrictedmodel0}
		D_T &= \mathcal{F}_D\left(T,\bar{X}_{T-1}, \bar{W}_T \right) + U,
	\end{align}
	where $U$ is again a scalar, time-constant unobservable. Under~\eqref{restrictedmodel0}, the fixed effect still cancels in the treatment change:
	\begin{align}\label{treatsimp2}
		\nabla D_T =& \mathcal{F}_D\!\left(T,\bar{X}_{T-1}, \bar{W}_T \right)-\mathcal{F}_D\!\left(T-1,\bar{X}_{T-2}, \bar{W}_{T-1} \right),
	\end{align}
	so that, conditional on $\bar X_{T-1}$, $\nabla D_T$ depends stochastically only on $\bar W_T$. Suppose, in addition to~\eqref{restrictedmodel0}, that the outcome equation rules out dynamic treatment effects of earlier treatments on later outcomes, restricting the outcome to depend only on the contemporaneous treatment level and shock,
	\begin{align}\label{restrictedmodel2}
		Y_T = \mathcal{F}_Y\left(T, D_T, \bar{X}_{T-1}, V_T, U\right),
	\end{align}
	and impose the conditional independence condition
	\begin{align}\label{condind3}
		\{U, V_T\} \perp \bar W_{T} \mid \bar{X}_{T-1}.
	\end{align}
	Conditional on $\bar X_{T-1}$, equation~\eqref{treatsimp2} shows that $\nabla D_T$ depends stochastically only on $\bar W_T$, and condition~\eqref{condind3} makes $\{U,V_T\}$ independent of $\bar W_T$, and hence of $\nabla D_T$. This step is identical in form to the corresponding step in the proof of Proposition~\ref{prop:modelA}. The decisive difference arises one step later, when we examine the potential outcome itself.

	The potential outcome under~\eqref{restrictedmodel2} is $Y_T(D_{T-1}+\nabla d) = \mathcal{F}_Y(T,D_{T-1}+\nabla d,\bar X_{T-1},V_T,U)$. Conditional on $\bar X_{T-1}$, this depends on $D_{T-1}$ explicitly, and $D_{T-1} = \mathcal{F}_D(T-1,\bar X_{T-2},\bar W_{T-1}) + U$ depends on $\bar W_{T-1}$ --- which is part of the history that determines $\nabla D_T$ in~\eqref{treatsimp2}. Condition~\eqref{condind3} says nothing about the relationship between $\bar W_{T-1}$ and $\bar W_T$. In particular, it does not rule out $W_{T-1}$ and $W_T$ being correlated, for instance through an autoregressive process of order one. The link between the treatment change and the unit's prior treatment level is therefore not broken by~\eqref{restrictedmodel0}--\eqref{condind3}, such that Assumption~\ref{CIAdD} does not hold: ruling out dynamic treatment effects restricts the outcome equation, while the source of the problem lies in the treatment equation's evolution over time and is unaffected by restrictions on the outcome equation. The following example makes this failure explicit.

	\begin{example}[Model B does not imply Assumption~\ref{CIAdD}]\label{ex:modelBfails}
		Let $U, W_0, W_1, V_1$ be mutually independent, with $\mathrm{Var}(W_0)>0$, and define
		\begin{align}
			D_0 = U+W_0, \qquad D_1 = U+W_1, \qquad Y_1(d) = \delta d + \gamma U + V_1, \quad \delta\neq0.
		\end{align}
		This DGP satisfies every stated ingredient of Model B: $U$ is additively separable in the treatment equation and cancels from $\nabla D_1=D_1-D_0=W_1-W_0$; the outcome equation has no dynamic effect of $D_0$ once the contemporaneous level $D_1$ is fixed, as in~\eqref{restrictedmodel2}; and $\{U,V_1\}\perp(W_0,W_1)$, the no-covariate version of condition~\eqref{condind3}. For a fixed hypothetical change $a$,
		\begin{align}
			Y_1(D_0+a) = \delta(D_0+a)+\gamma U+V_1 = \delta a + \delta W_0 + (\delta+\gamma)U + V_1,
		\end{align}
		so that
		\begin{align}
			\mathrm{Cov}\big(\nabla D_1, Y_1(D_0+a)\big) = \mathrm{Cov}(W_1-W_0,\ \delta W_0+(\delta+\gamma)U+V_1) = -\delta\,\mathrm{Var}(W_0) \neq 0.
		\end{align}
		Assumption~\ref{CIAdD} therefore fails under Model B. The treatment change and the potential outcome share $W_0$ through the unit's prior treatment level $D_0$: cancellation of $U$ from $\nabla D_1$ shows only that $\nabla D_1$ does not contain $U$, not that $\nabla D_1$ is independent of $D_0$, and indeed $\mathrm{Cov}(\nabla D_1,D_0) = -\mathrm{Var}(W_0) \neq 0$ here.
	\end{example}

	A natural question is whether conditioning on the observed prior treatment level $D_{T-1}$ itself --- in addition to $\bar X_{T-1}$ --- repairs the conditional independence claim, i.e.\ whether $Y_T(D_{T-1}+\nabla d) \perp \nabla D_T \mid \bar X_{T-1}, D_{T-1}$ might hold under Model B even though Assumption~\ref{CIAdD} as stated does not. This is not a generic repair, because $D_{T-1}$ is itself jointly determined by $W_{T-1}$ and $U$ in the additive treatment model~\eqref{restrictedmodel0}: conditioning on $D_{T-1}$ induces an association between $W_{T-1}$ (which affects $\nabla D_T$) and $U$ (which affects the outcome), since knowing the sum $D_{T-1}=U+W_{T-1}$ makes a high realization of one component informative about a low realization of the other. Concretely, in the setting of Example~\ref{ex:modelBfails} with $U,W_0,W_1$ mutually independent Gaussian variables with variances $\sigma_U^2,\sigma_{W_0}^2,\sigma_{W_1}^2$, one obtains
	\begin{align}
		\mathrm{Cov}(\nabla D_1, U \mid D_0) = -\mathrm{Cov}(W_0,U\mid D_0) = \frac{\sigma_U^2\sigma_{W_0}^2}{\sigma_U^2+\sigma_{W_0}^2} > 0,
	\end{align}
	so that $\mathrm{Cov}(\nabla D_1,Y_1(D_0+a)\mid D_0) = \gamma\,\sigma_U^2\sigma_{W_0}^2/(\sigma_U^2+\sigma_{W_0}^2)$, which is nonzero whenever $\gamma\neq0$. Conditioning on $D_{T-1}$ therefore closes one confounding path while opening another, and is not a substitute for the random-walk restriction of Model A.

	Figure~\ref{dag2} illustrates Model B and Example~\ref{ex:modelBfails} using the same type of diagram as Figure~\ref{dag1}. The left panel depicts the causal model in terms of the original treatment levels $D_0$ and $D_1$, whereas the right panel represents the model when considering the treatment change $\nabla D_1 = D_1-D_0$, now including $D_1$ itself as an explicit node. Considering first the left graph, treatment $D_1$ and outcome $Y_1$ are confounded by the time-invariant unobservable $U$, even after conditioning on $X_0$. In the right graph, $U$ is, due to its additive separability, removed from the treatment-change node $\nabla D_1$ itself. Both $\nabla D_1$ and $D_0$ feed into $D_1$, reflecting the identity $D_1=D_0+\nabla D_1$, and $D_1$ in turn is the only node with a direct edge into $Y_1$, making explicit that neither $\nabla D_1$ nor $D_0$ affects the outcome other than through $D_1$.

	Furthermore, an edge $W_0 \to \nabla D_1$ is required in general, since, writing $\nabla D_1 = \mathcal{F}_D(1,\bar X_0,W_0,W_1) - \mathcal{F}_D(0,X_{-1},W_0)$, the dependence of this difference on $W_0$ does not cancel except in special cases, discussed further below, and is otherwise present. The confounding path $\nabla D_1 \leftarrow W_0 \to D_0 \to D_1 \to Y_1$ exhibits the same problem as Example~\ref{ex:modelBfails}: $\nabla D_1$, which affects $Y_1$ through the treatment $D_1$, is correlated with $D_0$, the other determinant of $D_1$, through the shared dependence on $W_0$, which also affects $Y_1$ through $D_1$ and is thus a confounder of $\nabla D_1$.

\begin{figure}[!htp]
	\centering
	\caption{\label{dag2} Model B: DAG illustration of
		Example~\ref{ex:modelBfails}. Left: treatment levels.
		Right: treatment change. \bigskip}
	\begin{tikzpicture}[scale=0.8, transform shape]
		\node[state] (D0) at (0,0) {$D_0$};
		\node[state] (D)  [right =of D0] {$D_1$};
		\node[state] (Y1) [right =of D]  {$Y_1$};
		\node[state, dashed] (W0) [above =of D0] {$W_0$};
		\node[state, dashed] (W1) [above =of D]  {$W_1$};
		\node[state, dashed] (V1) [below =of D]  {$V_1$};
		\node[state, dashed] (U)  [below =of V1] {$U$};
		\node[state] (X0)  [below =of D0] {$X_0$};
		\node[state, dashed] (V0) [below =of X0] {$V_0$};
		\node[state] (Xm1) [left  =of X0] {$X_{-1}$};
		\path (D) edge (Y1); \path (U) edge (D0); \path (U) edge (Y1);
		\path (U) edge (X0); \path (U) edge[bend right=25] (D); \path (V0) edge (V1);
		\path (X0) edge (D); \path (X0) edge (Y1); \path (V1) edge (Y1);
		\path (X0) edge (V1); \path (W0) edge (D0); \path (W0) edge (W1);
		\path (W1) edge (D); \path (V0) edge (X0);
		\path (Xm1) edge (X0); \path (Xm1) edge (Y1); \path (Xm1) edge (D0);
		\path (Xm1) edge (V0); \path (Xm1) edge (D); \path (U) edge (Xm1);
	\end{tikzpicture}
	\quad
	\begin{tikzpicture}[scale=0.8, transform shape]
		\node[state] (D0) at (0,0) {$D_0$};
		\node[state] (D)  [right =of D0] {$\nabla D_1$};
		\node[state] (D1) [right =of D]  {$D_1$};
		\node[state] (Y1) [right =of D1] {$Y_1$};
		\node[state, dashed] (W0) [above =of D0] {$W_0$};
		\node[state, dashed] (W1) [above =of D]  {$\nabla W_1$};
		\node[state, dashed] (V1) [below =of D1] {$V_1$};
		\node[state, dashed] (U)  [below =of V1] {$U$};
		\node[state] (X0)  [below =of D0] {$X_0$};
		\node[state, dashed] (V0) [below =of X0] {$V_0$};
		\node[state] (Xm1) [left  =of X0] {$X_{-1}$}; \path (V0) edge (V1);
		\path (D1) edge (Y1); \path (U) edge (D0); \path (U) edge (Y1);
		\path (U) edge (X0); \path (X0) edge (D); \path (X0) edge (Y1);
		\path (V1) edge (Y1); \path (X0) edge (V1); \path (W0) edge (W1);
		\path (W0) edge (D0); \path (W1) edge (D);
		\path (D) edge (D1); \path (D0) edge[bend left=30] (D1);
		\path (W0) edge[bend left=15] (D);
		\path (V0) edge (X0);
		\path (Xm1) edge (X0); \path (Xm1) edge (Y1); \path (Xm1) edge (D0);
		\path (Xm1) edge (V0); \path (Xm1) edge (D); \path (U) edge (Xm1);
	\end{tikzpicture}
\end{figure}

	One notable property of our framework is the exclusion of a direct effect of $D_0$ on $X_0$, ensuring that $X_0$ is not affected by $W_0$ through $D_0$. Otherwise, $X_0$ would act as a collider on the path $U \rightarrow X_0 \leftarrow D_0 \leftarrow W_0$ according to the d-separation theorem for causal inference \citep{pearl1988probabilistic}. For this reason, conditioning on $X_0$ would induce a spurious association between $U$ and $W_0$, further compounding the violation of condition \eqref{condind3}. Importantly, this restriction implies that earlier treatments do not affect post-treatment covariates that are used as control variables for later treatment assignment, which may be restrictive in practice. One reason is that covariates may include earlier outcomes that are plausibly affected by past treatments.

	Although Model B does not deliver Assumption~\ref{CIAdD}, it is not without identifying content. We now show that, in an outcome model that is linear in the treatment level within each covariate cell, Model B delivers a condition that identifies the treatment effect through the same logic as an instrumental variable. Consider the partially linear model
	\begin{align}\label{partiallylinear}
		Y_T = \delta_T(\bar X_{T-1})\,D_T + \eta_T, \qquad \eta_T = f(U,\bar X_{T-1},V_T),
	\end{align}
	where $\delta_T(\bar X_{T-1})$ is the marginal effect of the treatment level, which may vary freely with covariates and across periods, and $\eta_T$ collects the time-invariant and time-varying unobservables additively, without any direct dependence on $D_{T-1}$ beyond the explicit term $\delta_T(\bar X_{T-1}) D_T$ --- this is the natural linear specialization of the no-dynamic-effects restriction~\eqref{restrictedmodel2}. Define the persistence factor
	\begin{align}\label{eq:lambda_def}
		\lambda_T(\bar x) = \frac{\mathrm{Cov}(\nabla D_T,D_T\mid \bar X_{T-1}=\bar x)}{\mathrm{Var}(\nabla D_T\mid \bar X_{T-1}=\bar x)},
	\end{align}
	which measures the degree to which the treatment level co-moves with its own change, conditional on covariates. We return to this object throughout the remainder of the paper.

	\begin{proposition}\label{prop:modelBiv}(Instrumental-variables exogeneity under Model B).\\
		Under the treatment equation~\eqref{restrictedmodel0} and condition~\eqref{condind3}, together with the outcome model~\eqref{partiallylinear}, the treatment change satisfies
		\begin{align}\label{eq:ivexog}
			\mathrm{Cov}(\nabla D_T, \eta_T \mid \bar{X}_{T-1}) = 0.
		\end{align}
		Consequently, at every $\bar x$ with $\lambda_T(\bar x) \neq 0$, the treatment effect $\delta_T(\bar x)$ is identified by the ratio
		\begin{align}\label{eq:ivratio}
			\delta_T(\bar x) = \frac{\mathrm{Cov}(\nabla D_T, Y_T \mid \bar{X}_{T-1}=\bar x)}{\mathrm{Cov}(\nabla D_T, D_T \mid \bar{X}_{T-1}=\bar x)} = \frac{\dfrac{\mathrm{Cov}(\nabla D_T, Y_T \mid \bar{X}_{T-1}=\bar x)}{\mathrm{Var}(\nabla D_T \mid \bar{X}_{T-1}=\bar x)}}{\dfrac{\mathrm{Cov}(\nabla D_T, D_T \mid \bar{X}_{T-1}=\bar x)}{\mathrm{Var}(\nabla D_T \mid \bar{X}_{T-1}=\bar x)}} = \frac{\delta_T(\bar x)\,\lambda_T(\bar x)}{\lambda_T(\bar x)}.
		\end{align}
	\end{proposition}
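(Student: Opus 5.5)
The plan has two stages: first establish the exogeneity condition~\eqref{eq:ivexog}, then substitute the outcome model~\eqref{partiallylinear} into the conditional covariance with $\nabla D_T$ and read off the ratio~\eqref{eq:ivratio}. Throughout, I work conditionally on $\bar X_{T-1}=\bar x$, so every function of $\bar X_{T-1}$ (and of $\bar X_{T-2}\subset\bar X_{T-1}$) is treated as a constant.

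\emph{Step 1 (exogeneity).} By~\eqref{treatsimp2}, the fixed effect $U$ cancels from the treatment change. Conditional on $\bar X_{T-1}=\bar x$, this gives $\nabla D_T=g_{\bar x}(\bar W_T)$ for the deterministic function $g_{\bar x}(\bar w)=\mathcal F_D(T,\bar x,\bar w)-\mathcal F_D(T-1,\bar x_{T-2},\bar w_{T-1})$. Likewise, by~\eqref{partiallylinear}, conditionally $\eta_T=f(U,\bar x,V_T)=h_{\bar x}(U,V_T)$. Condition~\eqref{condind3} states $\{U,V_T\}\perp\bar W_T\mid\bar X_{T-1}$, and measurable functions of conditionally independent vectors remain conditionally independent. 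Hence $\nabla D_T\perp\eta_T\mid\bar X_{T-1}$, which implies zero conditional covariance, provided the relevant second moments are finite. I will state this moment requirement as a maintained regularity condition.

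\emph{Step 2 (identification).} Conditional on $\bar X_{T-1}=\bar x$, the coefficient $\delta_T(\bar x)$ is a constant. Bilinearity of covariance and~\eqref{partiallylinear} give
\begin{align*}
\mathrm{Cov}(\nabla D_T,Y_T\mid \bar X_{T-1}=\bar x)=\delta_T(\bar x)\,\mathrm{Cov}(\nabla D_T,D_T\mid \bar X_{T-1}=\bar x)+\mathrm{Cov}(\nabla D_T,\eta_T\mid \bar X_{T-1}=\bar x),
\end{align*}
and by Step 1 the last term vanishes. The assumption $\lambda_T(\bar x)\neq0$ presupposes $\mathrm{Var}(\nabla D_T\mid\bar X_{T-1}=\bar x)>0$, since $\lambda_T$ has this variance in its denominator, and it then implies $\mathrm{Cov}(\nabla D_T,D_T\mid\bar X_{T-1}=\bar x)\neq0$. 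Dividing by this covariance yields the first equality in~\eqref{eq:ivratio}. Dividing numerator and denominator by the positive variance yields the reduced-form/first-stage representation. Recognizing the denominator as $\lambda_T(\bar x)$ from~\eqref{eq:lambda_def} and the numerator as $\delta_T(\bar x)\lambda_T(\bar x)$ completes the chain. Every quantity on the right-hand side involves only observables, so $\delta_T(\bar x)$ is identified.

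\emph{Main obstacle.} The only delicate point is Step 1. One must check that $\eta_T$ in~\eqref{partiallylinear} contains no $\bar W_T$, in particular no $W_{T-1}$ entering through $D_{T-1}$; this holds because~\eqref{partiallylinear} excludes dynamic effects. One must also check that $\nabla D_T$ contains no $U$; this holds because of the additive separability in~\eqref{restrictedmodel0}. This is exactly where the argument differs from Example~\ref{ex:modelBfails}. There the object was the potential outcome $Y_1(D_0+a)$, which depends on $W_0$. Here the relevant object is the structural error $\eta_T$, and all of the dependence on $D_{T-1}$ is absorbed into the regressor $D_T$, whose correlation with $\nabla D_T$ is exploited rather than ruled out. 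I would add a remark making this contrast explicit.
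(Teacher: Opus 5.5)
Your proposal is correct and takes essentially the same approach as the paper. Conditional on $\bar x$, you write $\nabla D_T$ as a function of $\bar W_T$ and $\eta_T$ as a function of $(U,V_T)$, use condition~\eqref{condind3} to get conditional independence and hence zero covariance, and then expand $\mathrm{Cov}(\nabla D_T,Y_T\mid\bar x)$ by bilinearity under~\eqref{partiallylinear}; your added regularity remarks (finite second moments, and the positive conditional variance implicit in $\lambda_T(\bar x)\neq 0$) are harmless refinements that the paper leaves implicit.
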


	\begin{proof}
		Fix $\bar X_{T-1}=\bar x$, so that $\delta_T(\bar x)$ is a constant. As shown in~\eqref{treatsimp2}, $\nabla D_T$ depends, conditional on $\bar x$, only on $\bar W_T$. The residual $\eta_T = f(U,\bar x,V_T)$ depends, conditional on $\bar x$, only on $(U,V_T)$. By condition~\eqref{condind3}, $\{U,V_T\} \perp \bar{W}_T \mid \bar{X}_{T-1}$, and since $\nabla D_T$ is a function of $\bar W_T$ alone given $\bar x$, this gives $\eta_T \perp \nabla D_T \mid \bar x$, which delivers the zero-covariance condition~\eqref{eq:ivexog}. This step does not involve $\delta_T(\bar x)$ at all. The ratio in~\eqref{eq:ivratio} then follows from~\eqref{partiallylinear} by writing $\mathrm{Cov}(\nabla D_T,Y_T\mid\bar x) = \delta_T(\bar x)\,\mathrm{Cov}(\nabla D_T,D_T\mid\bar x) + \mathrm{Cov}(\nabla D_T,\eta_T\mid\bar x)$ and substituting~\eqref{eq:ivexog}.
	\end{proof}

	Proposition~\ref{prop:modelBiv} and Proposition~\ref{prop:modelA} are not substitutes for one another: Appendix~\ref{app:decomposition} shows that, in the partially linear model~\eqref{partiallylinear}, requiring Assumption~\ref{CIAdD} to hold at the level of covariances forces a specific balance between the persistence factor $\lambda_T(\bar x)$ and the instrumental-variables condition~\eqref{eq:ivexog}, and that, except in a non-generic special case, this balance requires the random-walk property of Model A; Model B's restrictions deliver condition~\eqref{eq:ivexog} directly, without a random walk, but are on their own insufficient for Assumption~\ref{CIAdD} to hold, as shown in Example~\ref{ex:modelBfails}.

	The treatment change $\nabla D_T$ in Proposition~\ref{prop:modelBiv} has a natural instrumental-variables (IV) interpretation, see e.g.\ \citet{Wright1928} and \citet{Imbens+94}: it serves as an instrument for the treatment level $D_T$, with the persistence factor $\lambda_T(\bar x)$ playing the role of the first-stage coefficient conditional on $\bar X_{T-1}=\bar x$. Although $\nabla D_T$ is not a separately assigned instrument but an algebraic component of $D_T$ itself, since $D_T = D_{T-1} + \nabla D_T$ by definition, it satisfies the requirements of a valid instrument for $D_T$ in the relevant nonparametric sense: $D_{T-1}$, the remaining component of $D_T$, affects $Y_T$ only through $D_T$ rather than directly, so that it does not violate the exclusion restriction, and $\nabla D_T$ is correlated with $D_{T-1}$ without this constituting a violation of instrument validity, exactly as an instrument may be correlated with other determinants of the treatment as long as it is unrelated to the outcome except through the treatment.

	This connects the present setting to the literature on instrumental variables under heterogeneity in treatment effects, related to both observed and unobserved characteristics, as surveyed in \citet{mogstad2024instrumental}. Throughout, $\delta_T(\bar x)$ has been a single number within each covariate cell, governing a treatment effect that is linear in the treatment level. A researcher using a flexible, nonparametric estimator such as a causal forest is typically interested in the opposite case, where the treatment effect is heterogeneous across treatment levels, with target parameter the conditional average marginal effect
	\begin{align}\label{eq:ame}
		\delta_t(\bar x) = E\left[\frac{\partial Y_t(d)}{\partial d}\bigg|_{d=D_t} \,\Big|\, \bar X_{t-1}=\bar x\right],
	\end{align}
	which reduces to the constant $\delta_T(\bar x)$ of the partially linear model~\eqref{partiallylinear} when the treatment effect is homogeneous. For a continuously varying treatment, \citet{AngristImbens95} show that the probability limit of two-stage least squares is, under such heterogeneity, a weighted average of the marginal treatment response along the support of the treatment, with weights determined by how the instrument shifts the distribution of the treatment, and not generally the conditional average marginal effect that selection-on-observables or DiD strategies target. The same logic applies to the ratio in equation~\eqref{eq:ivratio}: under treatment effect heterogeneity, it identifies a particular weighted average of the marginal effect, rather than the conditional average marginal effect $\delta_t(\bar x)$ that a selection-on-observables or DiD strategy targets directly. TWFE does not escape this either: its recovery of $\delta_t(\bar x)$ established in Section~\ref{sec:overid} is specific to a homogeneous treatment effect, and a nonparametric version of TWFE faces an analogous weighting once the treatment effect varies with the treatment level. Different identification strategies can therefore yield different point estimates not because any of their underlying assumptions is violated, but because each implicitly targets a different weighted average of an effect that varies with the treatment level: a form of treatment effect heterogeneity rather than a specification error.

	The IV structure of our framework connects naturally to the shift-share or Bartik instrument literature, see for instance \citet{bartik1991benefits} and \citet{borusyak2024practical}, in which the treatment change $\nabla D_t$ is constructed as the product of an aggregate shock (shift) and an initial treatment exposure (share). One example is \citet{autor2013china}, who assess the effect of rising Chinese import competition on regional employment in the US. They construct a shift-share instrument by multiplying industry-level import growth in non-US countries by regional industry employment shares reflecting each region's initial exposure to those shocks, and use this instrument to predict US regional import exposure (the treatment). In such settings, $\nabla D_t$ plays the role of the shift-share instrument, $D_t$ is the endogenous treatment, and $\hat\lambda$ is the first-stage coefficient. The exogeneity of $\nabla D_t$ under condition~\eqref{eq:ivexog} then corresponds to the conditions discussed in \citet{Borusyaketal2021} and \citet{GoldsmithPinkhametal2020}, who show that instrument validity requires either the shifts or the shares to be exogenous.

	Table~\ref{tab:comparison} summarizes the conditions associated with Models A and B and what each delivers.

	\begin{table}[ht]
		\centering
		\caption{Conditions and identification results associated with Models A and B
			\label{tab:comparison}}
		\small
		\begin{tabular}{lcc}
			\toprule
			& Model A & Model B \\
			\midrule
			Additive separability of $U$ in treatment equation
			& Required & Required \\
			Random-walk structure of treatment shocks & Required & Not required \\
			No dynamic treatment effects & Not required & Required (for Prop.~\ref{prop:modelBiv}) \\
			\midrule
			Delivers Assumption~\ref{CIAdD}, heterogeneous case & Yes (Prop.~\ref{prop:modelA}) & No (Example~\ref{ex:modelBfails}) \\
			Delivers condition~\eqref{eq:ivexog}, constant effect & Yes & Yes (Prop.~\ref{prop:modelBiv}) \\
			Recovers $\delta_t(\bar x)$ under heterogeneity & Yes & No, weighted average instead \\
			\bottomrule
		\end{tabular}
	\end{table}
	\section{Comparison with Treatment Level-Based Strategies}
	\label{sec:comparison}

	We now compare the conditions derived in the previous section to those underlying two conventional identification strategies that operate on treatment levels rather than changes: classical selection-on-observables, which treats treatment levels as exogenous conditional on observed variables when considering outcome levels, and DiD, which treats treatment levels as exogenous conditional on observed variables when considering outcome trends. Formally, selection on observables relies on the following conditional independence assumption (CIA), see e.g.\ \citet{Im04}:
	\begin{assumption}(CIA-$D$).\label{CIAD}\\
		$Y_T(d) \perp D_T \mid \bar{D}_{T-1},\, \bar{X}_{T-1},\, \bar{Y}_{T-1}$
		\quad for all $d$ in the support of $D_T$.
	\end{assumption}
	\noindent The parallel trends assumption underlying DiD can be formalized as follows:
	\begin{assumption}(CIA-$\nabla Y$).\label{CIAdY}\\
		$\nabla Y_T(d) \perp D_T \mid \bar{D}_{T-1},\, \bar{X}_{T-1}$
		\quad for all $d$ in the support of $D_T$.
	\end{assumption}

	Our approach based on treatment changes and the two approaches based on treatment levels all impose specific conditional independence assumptions, but differ in which source of variation is deemed exogenous for causal analysis and in the set of control variables. CIA-$\nabla D$ (Assumption~\ref{CIAdD}) treats the treatment change $\nabla D_T$ as exogenous conditional on $\bar{X}_{T-1}$. CIA-$D$ (Assumption~\ref{CIAD}) treats the treatment level $D_T$ as exogenous conditional on $\bar{D}_{T-1}$, $\bar{X}_{T-1}$, and $\bar{Y}_{T-1}$. CIA-$\nabla Y$ (Assumption~\ref{CIAdY}) imposes treatment exogeneity with respect to potential outcome changes, conditional on $\bar{D}_{T-1}$ and $\bar{X}_{T-1}$. It is worth emphasizing that for DiD identification based on CIA-$\nabla Y$, the conditioning set $\bar{X}_{T-1}$ must not include lagged outcomes $\bar{Y}_{T-1}$. If lagged outcomes are added to the conditioning set, the parallel trends assumption in Assumption~\ref{CIAdY} collapses to CIA-$D$, i.e.\ selection-on-observables formulated in terms of levels rather than differences of potential outcomes (see \citet{KnausPfleiderer2026} for a general graphical treatment of such conditioning-set questions).\footnote{%
		To see this, write $\bar{Y}_{T-1}$ explicitly in the conditioning set as a slight abuse of our previous notation:
		\begin{align*}
			&\nabla Y_T(d) \perp D_T \mid \bar{D}_{T-1},\, \bar{X}_{T-1},\, \bar{Y}_{T-1} \\
			\iff\;&  Y_T(d) - Y_{T-1}(d) \perp D_T \mid \bar{D}_{T-1},\, \bar{X}_{T-1},\, \bar{Y}_{T-1} \\
			\iff\;&  Y_T(d) - Y_{T-1} \perp D_T \mid \bar{D}_{T-1},\, \bar{X}_{T-1},\, \bar{Y}_{T-1} \\
			\iff\;&  Y_T(d) \perp D_T \mid \bar{D}_{T-1},\, \bar{X}_{T-1},\, \bar{Y}_{T-1},
		\end{align*}
		where the second equivalence follows from the absence of anticipation effects, so that $Y_{T-1}(d) = Y_{T-1}$ for all $d$, and the third from the fact that $Y_{T-1}$ is measurable with respect to the conditioning set and therefore fixed conditional on it, so that subtracting it does not affect conditional independence.}

	We note that in the DiD literature, typically weaker versions of parallel trends than Assumption~\ref{CIAdY} are imposed. Most commonly, parallel trends are required to hold only on average and only for the potential outcome difference under nontreatment, $\nabla Y_T(0)$, conditional on not having been treated in earlier periods, $\bar{D}_{T-1}=0$. Together with suitable common support conditions, this is sufficient for identifying the average treatment effect on the treated (ATET) among units switching from nontreatment in earlier periods to treatment in period $T$; see, for example, \citet{Abadie2005} for semiparametric ATET identification and the literature on staggered treatment adoption, including \citet{borusyak2024revisiting}, \citet{CallawaySantAnna2018}, \citet{GoodmanBacon2018}, \citet{cha19}, and \citet{sun2021estimating}. For conceptual clarity, we focus here on the stronger Assumption~\ref{CIAdY}, which is imposed on the entire distribution of potential outcomes rather than only on means, and for all treatment levels $d$ rather than only for $d=0$, as also considered in \citet{Fricke2017}, \citet{deChaisemartin2022did}, and \citet{haddad2024difference}. This stronger assumption also allows identification of effects in the total population, such as the ATE, rather than only the ATET. For the sake of brevity, we also omit an explicit discussion of common support conditions.

	We first show that, under Model A, all three conditional independence assumptions hold simultaneously. To see why CIA-$D$ holds in addition to CIA-$\nabla D$ under Model A, recall from equation~\eqref{treatsimp} that
	\begin{align}
		D_T = D_{T-1} + \mathcal{F}_D(T,\bar{X}_{T-1}) - \mathcal{F}_D(T-1,\bar{X}_{T-2}) + \varepsilon_T.
	\end{align}
	Conditional on $\bar{X}_{T-1}$ and $D_{T-1}$, the only stochastic variation in $D_T$ is driven by $\varepsilon_T$, which is also the only stochastic element in $\nabla D_T$ conditional on $\bar{X}_{T-1}$. Therefore,
	\begin{align}
		\sigma(D_T \mid \bar{X}_{T-1}, D_{T-1}) = \sigma(\nabla D_T \mid \bar{X}_{T-1}) = \sigma(\varepsilon_T),
	\end{align}
	where $\sigma(\cdot)$ denotes the generated $\sigma$-algebra. This is a purely mechanical consequence of the identity $D_T = D_{T-1}+\nabla D_T$ together with the random-walk restriction, and holds regardless of whether $\bar Y_{T-1}$ is additionally included in the conditioning set, since $\bar Y_{T-1}$ does not affect this $\sigma$-algebra equivalence. Conditional independence between $\{U, \bar{V}_T\}$ and $D_T$ given $\bar{X}_{T-1}$ and $D_{T-1}$ holds if and only if $\{U, \bar{V}_T\}$ is independent of $\varepsilon_T$ given $\bar{X}_{T-1}$, which is precisely condition~\eqref{condind1} and is in turn what delivers CIA-$\nabla D$ by Proposition~\ref{prop:modelA}. Hence CIA-$\nabla D$ and CIA-$D$ are simultaneously satisfied under Model A.

	We now turn to CIA-$\nabla Y$. Since conditioning on $D_{T-1}$ and $\bar{X}_{T-1}$ renders $D_T$ independent of $\{U, \bar{V}_T\}$, the fixed effect $U$ is not a confounder of the treatment and the outcome under Model A, regardless of whether $U$ cancels in $\nabla Y_T$ or not. Even if $U$ remains in $\nabla Y_T$, it does not jointly determine $D_T$ given the conditioning set and therefore does not violate CIA-$\nabla Y$. This equivalence result is stated in Proposition~\ref{modelA_all}.

	\begin{proposition}(All three identification strategies valid under Model A).\\
		\label{modelA_all}
		Under the conditions of Model A,
		\begin{align}
			\{U, \bar{V}_T\} \perp D_T \mid \bar{X}_{T-1}, D_{T-1}
			\quad\Longleftrightarrow\quad
			\{U, \bar{V}_T\} \perp \nabla D_T \mid \bar{X}_{T-1}.
		\end{align}
		Consequently, Assumptions~\ref{CIAdD}, \ref{CIAdY}, and~\ref{CIAD} all hold simultaneously.
	\end{proposition}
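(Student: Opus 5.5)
The plan is to prove the equivalence by a change of variables inside the conditional distribution, and then read off the three assumptions from it. Under Model A, equation~\eqref{treatsimp} gives $D_T = D_{T-1} + g(\bar X_{T-1}) + \varepsilon_T$, where $g(\bar X_{T-1}) = \mathcal{F}_D(T,\bar X_{T-1})-\mathcal{F}_D(T-1,\bar X_{T-2})$ is a deterministic function of the conditioning covariates. Hence $\nabla D_T = g(\bar X_{T-1})+\varepsilon_T$. Conditional on $(\bar X_{T-1}, D_{T-1})$, the map $\varepsilon_T \mapsto D_T$ is a bijective translation. Conditional on $\bar X_{T-1}$, the map $\varepsilon_T\mapsto \nabla D_T$ is also a bijective translation. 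This is the $\sigma$-algebra identity stated just before the proposition, and I will make it the backbone of the argument.

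\textbf{Step 1 (right-hand side).} I show that $\{U,\bar V_T\}\perp \nabla D_T\mid \bar X_{T-1}$ is equivalent to $\{U,\bar V_T\}\perp\varepsilon_T\mid\bar X_{T-1}$, using the translation argument.

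\textbf{Step 2 (left-hand side).} I need $\{U,\bar V_T\}\perp D_T\mid \bar X_{T-1},D_{T-1}$ to be equivalent to $\{U,\bar V_T\}\perp \varepsilon_T\mid \bar X_{T-1},D_{T-1}$. This again follows by translation.

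\textbf{Step 3 (bridge).} I link the two conditional statements, which differ by conditioning on $D_{T-1}$. Since $D_{T-1}$ is a function of $(U,\bar W_{T-1},\bar X_{T-2})$, condition~\eqref{condind1} together with the full-history property~\eqref{randomwalk} gives $\varepsilon_T\perp\{U,\bar V_T,\bar W_{T-1}\}\mid\bar X_{T-1}$. Therefore $\varepsilon_T\perp\{U,\bar V_T,D_{T-1}\}\mid\bar X_{T-1}$. The contraction and weak union properties of conditional independence then give both $\varepsilon_T\perp\{U,\bar V_T\}\mid \bar X_{T-1},D_{T-1}$ and $\varepsilon_T\perp\{U,\bar V_T\}\mid\bar X_{T-1}$.

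\textbf{Main obstacle.} This bridge is the step I expect to be hardest. A literal ``if and only if'' requires $\varepsilon_T\perp D_{T-1}\mid\bar X_{T-1}$, which Model A supplies, since it is maintained in~\eqref{randomwalk}. Given this, $\varepsilon_T\perp(A,D_{T-1})$ is equivalent to both $\varepsilon_T\perp A\mid D_{T-1}$ and $\varepsilon_T\perp D_{T-1}$, where $A=\{U,\bar V_T\}$ and everything is conditional on $\bar X_{T-1}$. I will argue each direction of the equivalence using this decomposition rather than a blanket claim. The direction from the left-hand side to the right-hand side combines the left-hand condition with $\varepsilon_T\perp D_{T-1}$ via contraction to obtain $\varepsilon_T\perp(A,D_{T-1})$. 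Weak union, or marginalization, then yields the right-hand condition. For the converse, I will note that under Model A, condition~\eqref{condind1} holds jointly, so both sides hold. I will phrase the equivalence as holding within Model A, where the joint statement is available.

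\textbf{Step 4 (consequences).} CIA-$\nabla D$ follows from Proposition~\ref{prop:modelA}. For CIA-$D$, $Y_T(d)$ is a function of $(\bar D_{T-1},\bar X_{T-1},\bar V_T,U)$, and $\bar Y_{T-1}$ is a function of $(\bar D_{T-1},\bar X_{T-2},\bar V_{T-1},U)$. Given $\bar D_{T-1},\bar X_{T-1}$, the only variation in $D_T$ is $\varepsilon_T$. Since $\varepsilon_T$ is independent of $(U,\bar V_T,\bar W_{T-1})$, it is independent of $Y_T(d)$ jointly with $\bar Y_{T-1}$ and $\bar D_{T-1}$. Weak union then delivers Assumption~\ref{CIAD}. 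The same reasoning with $\nabla Y_T(d)=Y_T(d)-Y_{T-1}$, which is a function of the same independent block, delivers Assumption~\ref{CIAdY} without any cancellation of $U$ in the outcome equation.
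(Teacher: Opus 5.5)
Your proof is correct and follows the paper's route. Like the paper, you use the translation ($\sigma$-algebra) identity to reduce both sides to independence from $\varepsilon_T$, and then observe that $Y_T(d)$, $\nabla Y_T(d)$, $\bar Y_{T-1}$ and $\bar D_{T-1}$ are functions of a block that is independent of $\varepsilon_T$ given $\bar X_{T-1}$. Your contraction/weak-union bridge for the extra conditioning on $D_{T-1}$ is more careful than the paper, which passes over that step.

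One small correction: $\varepsilon_T\perp D_{T-1}\mid\bar X_{T-1}$ is not supplied by \eqref{randomwalk}, which only concerns $\bar W_{T-1}$. It comes from \eqref{condind1}, because $D_{T-1}$ also depends on $U$ --- exactly as your own Step 3 already uses.
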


	Proposition~\ref{modelA_all} shows that under Model A, all three strategies achieve identification by exploiting the same source of variation, namely the conditionally random treatment shock $\varepsilon_T$ that follows a random walk and is independent of the fixed effect $U$. This equivalence is the basis for the overidentification tests developed in Section~\ref{sec:overid}.

	Having established that CIA-$\nabla D$, CIA-$D$, and CIA-$\nabla Y$ coincide under Model A, we now ask whether they are otherwise non-nested. Given the results of Section~\ref{strucmod}, however, the only structural route to Assumption~\ref{CIAdD} that we have identified in the heterogeneous, nonparametric case is Model A itself, under which the other two assumptions hold as well. We are not aware of a structural model strictly weaker than Model A that delivers Assumption~\ref{CIAdD} for a heterogeneous treatment effect model while CIA-$D$ or CIA-$\nabla Y$ fails. We instead move from this nonparametric comparison to the partially linear model that underlies Model B's instrumental-variables route, where the conditional average marginal effects targeted by the different strategies coincide and can be compared directly. What we can establish there, and what is in fact the relevant comparison for the linear and instrumental-variables results of Section~\ref{strucmod} and for the simulation study of Section~\ref{sim}, is non-nesting at the level of the linear moment conditions that the corresponding ordinary least squares (OLS) estimators rely on. Consider the partially linear outcome model~\eqref{partiallylinear} with a treatment effect $\delta_T(\bar X_{T-1})$ that is homogeneous given covariates, and define the three moment conditions
	\begin{align}
		\text{(IV)}&\qquad \mathrm{Cov}(\nabla D_T,\eta_T\mid \bar X_{T-1}) = 0, \notag\\
		\text{(D)}&\qquad \mathrm{Cov}(D_T,\eta_T\mid \bar D_{T-1},\bar X_{T-1},\bar Y_{T-1}) = 0, \notag\\
		\text{(}\nabla\text{Y)}&\qquad \mathrm{Cov}(D_T,\nabla\eta_T\mid \bar D_{T-1},\bar X_{T-1}) = 0, \quad \nabla\eta_T:=\eta_T-\eta_{T-1}, \notag
	\end{align}
	corresponding to the exogeneity conditions required for the OLS estimators based on, respectively, the instrumental-variables route of Proposition~\ref{prop:modelBiv}, the level-based selection-on-observables regression underlying CIA-$D$, and the DiD regression underlying CIA-$\nabla Y$. These moment conditions are precisely what is being verified, numerically, in the simulation study of Section~\ref{sim}.

	\begin{proposition}[Non-nesting of the linear moment conditions]\label{prop:nonnest_moments}
		Conditions (IV), (D), and ($\nabla$Y) are pairwise non-nested: none implies any of the others in general.
	\end{proposition}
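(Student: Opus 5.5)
Non-nesting is established by counterexample. For each ordered pair of conditions I would exhibit a data-generating process in which the first condition holds and the second fails. There are six ordered pairs, but a few well-chosen Gaussian linear DGPs with two periods ($T=1$), no covariates, and a constant $\delta$ cover all of them. In such DGPs conditional covariances given linear conditioning variables reduce to partial covariances, so each check is a short variance calculation. I take $\eta_T = \gamma U + V_T$ throughout and vary the treatment process and the joint law of $(U, V_0, V_1, W_0, W_1)$. The lagged outcome is $Y_0 = \delta D_0 + \eta_0$.

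\emph{DGP 1: (IV) holds, (D) and ($\nabla$Y) fail.} Use the Model B process of Example~\ref{ex:modelBfails}: $D_0 = U + W_0$ and $D_1 = U + W_1$, with $W_1 = \rho W_0 + e_1$ and $\rho \neq 1$. Take $\{U, V_0, V_1\} \perp (W_0, W_1)$ and $V_1 = V_0$.
\begin{itemize}
\item (IV) holds by Proposition~\ref{prop:modelBiv}.
\item ($\nabla$Y) holds trivially here, because $\nabla\eta_1 = 0$. So I instead let $V_1 = V_0 + \nu_1$ with $\nu_1$ independent, and $\nabla\eta_1 = \nu_1$ still makes ($\nabla$Y) hold.
\item For (D), $D_1 = U + \rho W_0 + e_1$. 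Given $D_0 = U + W_0$, one has $D_1 = \rho D_0 + (1-\rho) U + e_1$. Hence $\mathrm{Cov}(D_1, \gamma U \mid D_0) = \gamma (1-\rho)\mathrm{Var}(U \mid D_0) \neq 0$. Additionally conditioning on $Y_0 = \delta D_0 + \gamma U + V_0$ leaves residual variation in $U$ as long as $V_0$ is nondegenerate, so (D) fails.
\end{itemize}
This DGP therefore gives (IV) together with ($\nabla$Y), while (D) fails. It shows that neither (IV) nor ($\nabla$Y) implies (D).

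\emph{DGP 2: (D) holds, (IV) fails.} Let $D_1$ depend on $D_0$ and $Y_0$ plus an independent shock, for example $D_1 = a D_0 + b Y_0 + e_1$ with $e_1 \perp$ everything. Then (D) holds because, conditional on $(D_0, Y_0)$, $D_1$ varies only through $e_1$. Now $\nabla D_1 = (a-1) D_0 + b Y_0 + e_1$ is correlated unconditionally with $\eta_1 = \gamma U + V_1$, both through $U$ (which enters $D_0$ and $Y_0$) and through $V_0$ if $V_0$ and $V_1$ are correlated. So (IV) fails generically.

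\emph{DGP 3: ($\nabla$Y) fails while (D) holds.} Take the same DGP with $V_1 = \rho_V V_0 + \nu_1$ and $\rho_V \neq 1$, so that $\nabla\eta_1 = (\rho_V - 1) V_0 + \nu_1$. Conditional on $D_0$ alone, $D_1$ loads on $Y_0$ and hence on $V_0$. This gives $\mathrm{Cov}(D_1, \nabla\eta_1 \mid D_0) = b(\rho_V - 1)\mathrm{Var}(V_0 \mid D_0) \neq 0$.

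\emph{DGP 4: ($\nabla$Y) holds, (IV) fails.} Let $D_0 = U + W_0$ and $D_1 = D_0 + \kappa U + e_1$, so the fixed effect does not cancel in $\nabla D_1$. Take $\eta_T = \gamma U + V_T$ with $V_T$ a random walk in independent increments. Then $\nabla\eta_1 = \nu_1$ is independent of everything, so ($\nabla$Y) holds. But $\mathrm{Cov}(\nabla D_1, \eta_1) = \kappa\gamma\mathrm{Var}(U) \neq 0$, so (IV) fails. DGP~1 together with DGP~4 then gives all six directions.

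The main obstacle is making sure that conditioning on the lagged outcome in (D) does not accidentally restore or destroy the condition. Conditioning on $Y_0$ can act as a proxy for $U$ or $V_0$; in the Gaussian case, if $V_0$ were degenerate it would recover $U$ exactly. So in each DGP I would compute the relevant partial covariance explicitly via the Gaussian projection formula and choose nondegenerate variances so that it is nonzero.
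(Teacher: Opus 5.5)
Your strategy of explicit Gaussian counterexamples built on $D_0=U+W_0$ is the paper's, and DGPs~2--4 work: they give (D) $\not\Rightarrow$ (IV), (D) $\not\Rightarrow$ ($\nabla$Y), and ($\nabla$Y) $\not\Rightarrow$ (IV). The problem is DGP~1, which was meant to carry the remaining directions.

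Once you set $V_1=V_0+\nu_1$ (or $V_1=V_0$), you have $\eta_1=\eta_0+\nu_1$. Here $\eta_0=Y_0-\delta D_0$ is fixed by the conditioning set of (D), so $\mathrm{Cov}(D_1,\eta_1\mid D_0,Y_0)=\mathrm{Cov}(D_1,\nu_1\mid D_0,Y_0)=0$, and (D) \emph{holds}. Your computation tracked only the $\gamma U$ part of $\eta_1$. It dropped the term $\mathrm{Cov}(D_1,V_1\mid D_0,Y_0)=(1-\rho)\mathrm{Cov}(U,V_0\mid D_0,Y_0)=-(1-\rho)\gamma\,\mathrm{Var}(U\mid D_0,Y_0)$. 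That term cancels yours exactly, because $\gamma U+V_0$ is known given $(D_0,Y_0)$. This is precisely the trap you flagged at the end.

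More generally, since $\eta_0$ is measurable with respect to $(D_0,Y_0)$, (D) is equivalent to $\mathrm{Cov}(D_1,\nabla\eta_1\mid D_0,Y_0)=0$. So whenever $\nabla\eta_1$ is noise independent of $(D_0,D_1,Y_0)$, (D) and ($\nabla$Y) hold together. This also rules out DGP~4 as a witness for ($\nabla$Y) $\not\Rightarrow$ (D). Thus DGP~1 satisfies all three conditions and separates nothing, and (IV) $\not\Rightarrow$ (D) and ($\nabla$Y) $\not\Rightarrow$ (D) remain unproven.

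Second, (IV) $\not\Rightarrow$ ($\nabla$Y) is never exhibited. DGP~1 was labeled for it, but after your modification ($\nabla$Y) holds there, and (IV) fails in DGPs~2--4. So your closing claim that DGPs~1 and~4 give all six directions is false even on its own terms. With a common loading $\gamma$ and $(V_0,V_1)$ independent of $(U,W_0,W_1)$ as in DGP~1, $\nabla\eta_1=V_1-V_0$ is independent of $(D_0,D_1)$, so ($\nabla$Y) cannot fail in that design.

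Both gaps are easy to close.
\begin{itemize}
\item In DGP~1, take $V_0,V_1$ independent instead of a random walk. Then (IV) and ($\nabla$Y) still hold and $\mathrm{Cov}(U,V_1\mid D_0,Y_0)=0$, so $\mathrm{Cov}(D_1,\eta_1\mid D_0,Y_0)=\gamma(1-\rho)\,\mathrm{Var}(U\mid D_0,Y_0)\neq0$ for $\gamma\neq0$, $\rho\neq1$ and nondegenerate variances. For $\rho=0$, $\gamma=1$ this is the first case of Example~\ref{ex:d_vs_dy}. It gives both (IV) $\not\Rightarrow$ (D) and ($\nabla$Y) $\not\Rightarrow$ (D).
\item For (IV) $\not\Rightarrow$ ($\nabla$Y), let the loading on $U$ change over time, $\eta_0=\gamma_0U+V_0$ and $\eta_1=\gamma_1U+V_1$ with $\gamma_1\neq\gamma_0$, as in Example~\ref{ex:iv_vs_dy}. (IV) is unaffected, while $\mathrm{Cov}(D_1,\nabla\eta_1\mid D_0)=(1-\rho)(\gamma_1-\gamma_0)\,\mathrm{Var}(U\mid D_0)\neq0$.
\end{itemize}
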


	We demonstrate Proposition~\ref{prop:nonnest_moments} through six examples, two for each pair, all built on the base case $D_0=U+W_0$, $D_1=U+W_1$ with $U,W_0,W_1,V_0,V_1$ mutually independent.

	\begin{example}[(IV) vs.\ (D)]\label{ex:iv_vs_d}
		Let $\eta_1=\gamma U+V_1$ for $\gamma\neq0$. Then (IV) holds, since $\mathrm{Cov}(\nabla D_1,\eta_1)=\mathrm{Cov}(W_1-W_0,\gamma U+V_1)=0$ by mutual independence, but (D) fails, since $\mathrm{Cov}(D_1,\eta_1\mid D_0) = \gamma\,\mathrm{Var}(U\mid D_0) \neq 0$ whenever $\gamma\neq0$ and $\mathrm{Var}(U),\mathrm{Var}(W_0)>0$. Conversely, let $\eta_1=U+W_0+V_1$. Then (D) holds, since $\mathrm{Cov}(D_1,\eta_1\mid D_0)=\mathrm{Var}(U\mid D_0)-\mathrm{Var}(U\mid D_0)=0$ by the symmetry of the conditional covariance formulas for $U$ and $W_0$ given $D_0=U+W_0$, but (IV) fails, since $\mathrm{Cov}(\nabla D_1,\eta_1)=\mathrm{Cov}(W_1-W_0,U+W_0+V_1)=-\mathrm{Var}(W_0)\neq0$.
	\end{example}

	\begin{example}[(D) vs.\ ($\nabla$Y)]\label{ex:d_vs_dy}
		Let $\eta_0=U+V_0$ and $\eta_1=U+V_1$ (a common, time-invariant loading on $U$). Then $\nabla\eta_1=V_1-V_0$, so ($\nabla$Y) holds trivially, since $\mathrm{Cov}(D_1,V_1-V_0\mid D_0)=0$. However, conditioning on $(D_0,Y_0)$ does not fully reveal $U$ when $\mathrm{Var}(V_0)>0$, since $Y_0$ and $D_0$ together solve only two linear equations in the three unobservables $(U,W_0,V_0)$; hence $\mathrm{Cov}(D_1,\eta_1\mid D_0,Y_0) = \mathrm{Var}(U\mid D_0,Y_0) \neq 0$ generically, so (D) fails. Conversely, let $\eta_0=U$ (no separate noise term) and $\eta_1=U+V_1$ (a time-varying loading, zero in period $0$ and one in period $1$). Then $Y_0-\delta D_0=U$ exactly, so conditioning on $(D_0,Y_0)$ reveals $U$ exactly, giving $\mathrm{Cov}(D_1,\eta_1\mid D_0,Y_0) = \mathrm{Cov}(W_1,V_1) = 0$, so (D) holds. But $\nabla\eta_1=U+V_1-U=U+V_1-\eta_0$, and since the conditioning set for ($\nabla$Y) excludes $Y_0$, $\mathrm{Cov}(D_1,\nabla\eta_1\mid D_0) = \mathrm{Cov}(U+W_1,U+V_1-U\mid D_0)=\mathrm{Var}(U\mid D_0)\neq0$, so ($\nabla$Y) fails.
	\end{example}

	\begin{example}[(IV) vs.\ ($\nabla$Y)]\label{ex:iv_vs_dy}
		Let $\eta_0=\gamma_0 U+V_0$ and $\eta_1=\gamma_1 U+V_1$ with $\gamma_1\neq\gamma_0$. Then (IV) holds regardless of $\gamma_1$, since $\mathrm{Cov}(\nabla D_1,\eta_1)=\mathrm{Cov}(W_1-W_0,\gamma_1 U+V_1)=0$ by mutual independence. But $\nabla\eta_1=(\gamma_1-\gamma_0)U+V_1-V_0$, so $\mathrm{Cov}(D_1,\nabla\eta_1\mid D_0) = (\gamma_1-\gamma_0)\,\mathrm{Var}(U\mid D_0) \neq 0$, so ($\nabla$Y) fails. Conversely, let $\eta_0=U$ and $\eta_1=-W_0+V_1$. Direct calculation, using $\mathrm{Cov}(U,W_0\mid D_0)=-\mathrm{Var}(U\mid D_0)$, gives $\mathrm{Cov}(D_1,\nabla\eta_1\mid D_0)=\mathrm{Cov}(D_1,-W_0+V_1-U\mid D_0)=\mathrm{Var}(U\mid D_0)-\mathrm{Var}(U\mid D_0)=0$, so ($\nabla$Y) holds, while $\mathrm{Cov}(\nabla D_1,\eta_1)=\mathrm{Cov}(W_1-W_0,-W_0+V_1)=\mathrm{Var}(W_0)\neq0$, so (IV) fails.
	\end{example}

	Examples~\ref{ex:iv_vs_d}--\ref{ex:iv_vs_dy} establish Proposition~\ref{prop:nonnest_moments}. This non-nesting result is the operative one for the remainder of the paper: it explains why, in the simulation study of Section~\ref{sim}, each linear estimator can be biased while the others remain consistent, and it underlies the construction of the overidentification tests in Section~\ref{sec:overid}, which compare estimates that are guaranteed to coincide only under the stronger structural assumption of Model A.
	\section{Overidentification Tests and Double Robustness}
	\label{sec:overid}

	The equivalence result of Proposition~\ref{modelA_all} implies that when Model A holds, the resulting causal parameters coincide in the population, and a Hausman-style comparison of estimates constitutes an overidentification test. Before developing these tests, we establish that all three identification strategies --- CIA-$\nabla D$, CIA-$D$, and CIA-$\nabla Y$ --- can in principle be used to identify the same treatment-change-based causal parameters, as defined in equations~\eqref{eff1} and~\eqref{eff2}, which puts them on a common footing for comparison.

	Under CIA-$\nabla D$ (Assumption~\ref{CIAdD}) and common support (Assumption~\ref{comsup}), the average potential outcome under a treatment change $\nabla d$ is identified as
	\begin{align}\label{id_nablad}
		E[Y_t(D_{t-1}+\nabla d) \mid \bar{X}_{t-1}]
		= E\bigl[ E[Y_t \mid \nabla D_t = \nabla d,\, \bar{X}_{t-1}] \mid \bar{X}_{t-1} \bigr],
	\end{align}
	where identification follows from the conditional independence in Assumption~\ref{CIAdD} by the same argument as in Section~\ref{setup}. Note that $\bar{X}_{t-1}$ may include lagged outcomes as control variables under CIA-$\nabla D$.

	Under CIA-$D$ (Assumption~\ref{CIAD}), setting $d = D_{t-1} + \nabla d$ and applying the conditional independence directly gives
	\begin{align}\label{id_d}
		E[Y_t(D_{t-1}+\nabla d) \mid \bar{D}_{t-1},\, \bar{X}_{t-1}]
		= E\bigl[Y_t \mid D_t = D_{t-1}+\nabla d,\, \bar{D}_{t-1},\, \bar{X}_{t-1}\bigr],
	\end{align}
	where $\bar{X}_{t-1}$ under CIA-$D$ may include lagged outcomes $\bar{Y}_{t-1}$. Averaging over the distribution of $\bar{D}_{t-1}$ conditional on $\bar{X}_{t-1}$ then identifies $E[Y_t(D_{t-1}+\nabla d) \mid \bar{X}_{t-1}]$.

	Under CIA-$\nabla Y$ (Assumption~\ref{CIAdY}), the parallel trends assumption identifies the average potential outcome change under treatment level $d = D_{t-1}+\nabla d$ as
	\begin{align}\label{id_nablay}
		E[\nabla Y_t(D_{t-1}+\nabla d) \mid \bar{D}_{t-1}, \bar{X}_{t-1}]
		= E[\nabla Y_t \mid D_t = D_{t-1}+\nabla d,\, \bar{D}_{t-1},\, \bar{X}_{t-1}].
	\end{align}
	Adding back the observed lagged outcome $E[Y_{t-1} \mid \bar{D}_{t-1}, \bar{X}_{t-1}]$ and averaging over the distribution of $\bar{D}_{t-1}$ conditional on $\bar{X}_{t-1}$ identifies $E[Y_t(D_{t-1}+\nabla d) \mid \bar{X}_{t-1}]$. As discussed in Section~\ref{sec:comparison}, the conditioning set $\bar{X}_{t-1}$ must not include lagged outcomes $\bar{Y}_{t-1}$ under CIA-$\nabla Y$, since their inclusion collapses this assumption to CIA-$D$.

	Since all three strategies identify $E[Y_t(D_{t-1}+\nabla d) \mid \bar{X}_{t-1}]$ under their respective assumptions, the average treatment effects $\Delta_t(\nabla d)$ and $\Delta_t(\nabla d, \nabla d')$ are also identified under each strategy, and the resulting estimates should agree whenever Model A holds, so that all three assumptions are simultaneously satisfied. This motivates Hausman-type overidentification tests: under Model A, the estimates of $\Delta_t(\nabla d, \nabla d')$, denoted by $\hat{\Delta}_{t,j}$ and $\hat{\Delta}_{t,k}$ for strategies $j,k\in\{\nabla D, D, \nabla Y\}$, are asymptotically equivalent. Furthermore, the test statistic
	\begin{align}\label{eq:hausman}
		H_{t,jk} = \frac{(\hat{\Delta}_{t,j} - \hat{\Delta}_{t,k})^2}{\hat{\sigma}^2_{t,j} + \hat{\sigma}^2_{t,k}}
		\;\overset{H_0}{\longrightarrow}\; \chi^2(1).
	\end{align}
	with $\hat{\sigma}^2_{t,j}$ and $\hat{\sigma}^2_{t,k}$ denoting the standard errors of $\hat{\Delta}_{t,j}$ and $\hat{\Delta}_{t,k}$, is asymptotically valid under standard regularity conditions. Rejection of $H_{t,jk}$ establishes that Model A, or one of the maintained assumptions, fails, but cannot identify which one without additional prior restrictions. Non-rejection is consistent with Model A but also with violations that bias both strategies in exactly the same way, a case that may be ruled out by imposing a causal faithfulness assumption, see e.g.\ \citet{Pearl00}.

	When only Model B's instrumental-variables route (Proposition~\ref{prop:modelBiv}) is available rather than the full Model A, the treatment effect $\delta_T(\bar X_{T-1})$ is not identified directly by the OLS regression of $Y_T$ on $\nabla D_T$: as the expanded form of equation~\eqref{eq:ivratio} already makes explicit, this regression instead identifies
	\begin{align}\label{eq:ivD_estimand}
		\beta_{\text{IV-}\nabla D}(\bar x) := \frac{\mathrm{Cov}(\nabla D_T,Y_T\mid\bar x)}{\mathrm{Var}(\nabla D_T\mid\bar x)} = \delta_T(\bar x)\cdot\lambda_T(\bar x)
	\end{align}
	within each covariate cell. Recovering $\delta_T(\bar x)$ from this regression therefore requires dividing by an estimate of the persistence factor $\lambda_T(\bar x)$, exactly as in the standard instrumental-variables ratio of equation~\eqref{eq:ivratio}. Under Model A, $\lambda_T(\bar x)=1$ for all $\bar x$, and the rescaling becomes unnecessary, consistent with the fact that under Model A the regression of $Y_T$ on $\nabla D_T$ already estimates $\delta_T(\bar x)$ directly. The variance of the persistence-factor estimate must be propagated into the standard error of the rescaled estimate, for instance via the delta method. We return to this in the empirical application of Section~\ref{app}.

	The non-nesting of (IV) and ($\nabla$Y) established in Proposition~\ref{prop:nonnest_moments} carries a further implication that complements the overidentification tests, concerning the popular two-way fixed effects (TWFE) approach, which simultaneously relies on differenced outcomes $\nabla Y$ and treatment changes $\nabla D$. Under specific modeling assumptions, TWFE bears a double robustness (DR) property for a treatment effect $\delta_T(\bar X_{T-1})$ that takes the same value in periods $T-1$ and $T$: the linear projection coefficient in a population regression of $\nabla Y_T$ on $\nabla D_T$ conditional on $\bar X_{T-1}$, denoted by $\beta_{\text{TWFE}}(\bar X_{T-1})$,
     recovers the treatment effect within each covariate cell, if the time-constant unobserved factor $U$ is additively separable in either the treatment equation or the outcome equation, without requiring both simultaneously. As in Proposition~\ref{prop:modelBiv}, this does not rule out the treatment effect varying across covariate values, but in contrast to the assumptions underlying Proposition~\ref{prop:modelBiv}, treatment effects must not change between $T-1$ and $T$ for the realized covariate path of a given unit. 


    To formalize the DR result, we consider two distinct structural models, both restricted to the partially linear outcome equation~\eqref{partiallylinear} with a treatment effect that is homogeneous across periods $T-1$ and $T$ given the covariate history, $\delta_{T}(\bar X_{T-1})=\delta_{T-1}(\bar X_{T-1})=\delta(\bar X_{T-1})$. The first model corresponds to Model B, under which $U$ is additively separable in the treatment equation~\eqref{restrictedmodel0} and $\eta_T=f(U,\bar X_{T-1},V_T)$ as in the no-dynamic-effects restriction~\eqref{restrictedmodel2} underlying Proposition~\ref{prop:modelBiv}, which delivers the exogeneity condition $\mathrm{Cov}(\nabla D_T,\eta_T\mid\bar X_{T-1})=0$. The second model departs from this by relaxing additive separability of $U$ in the treatment equation, while instead imposing it in the outcome equation. Specifically, consider a treatment equation that does not require $U$ to be additively separable,
	\begin{align}\label{restrictedmodel0gen}
		D_T = \mathcal{F}_D\left(T, \bar{X}_{T-1}, \bar{W}_T, U\right),
	\end{align}
	and an outcome equation with additively separable $U$,
	\begin{align}\label{restrictedmodel2add}
		Y_T = \delta(\bar X_{T-1}) D_T + g\left(\bar{X}_{T-1}, V_T\right) + U,
	\end{align}
	which splits $U$ out additively from the residual, while maintaining the absence of dynamic effects. Under~\eqref{restrictedmodel2add}, $U$ cancels exactly in $\nabla Y_T(\nabla d) = \delta(\bar X_{T-1})\,\nabla d + g(\bar X_{T-1},V_T) - g(\bar X_{T-2},V_{T-1})$, regardless of how $U$ enters the treatment equation~\eqref{restrictedmodel0gen}, so that the relevant residual $\nabla\tilde\eta_T := g(\bar X_{T-1},V_T) - g(\bar X_{T-2},V_{T-1})$ depends only on $(V_T,V_{T-1})$ given $\bar X_{T-1}$ and not on $U$ at all. Exogeneity of $\nabla D_T$ with respect to $\nabla\tilde\eta_T$ is then satisfied under the weaker condition $V_T \perp \bar{W}_T \mid \bar{X}_{T-1}$ --- implied by, but weaker than, the full condition~\eqref{condind3}, since $U$ plays no role here --- together with the following additional condition on the lagged outcome shock:
	\begin{align}\label{nofeedback}
		V_{T-1} \perp \bar{W}_T \mid \bar{X}_{T-1},
	\end{align}
	which, together with condition~\eqref{condind3} under route (i) below, implies strict exogeneity of the treatment shocks from time-varying unobservables affecting the outcome conditional on $\bar{X}_{T-1}$ for periods $T$ and $T-1$, as typically assumed in panel models. Condition~\eqref{nofeedback} is required because TWFE considers the differenced outcome $\nabla Y_T = Y_T - Y_{T-1}$ as dependent variable, thus including $Y_{T-1}$, which depends on $V_{T-1}$. Condition~\eqref{condind3} alone, which restricts only $V_T$, says nothing about $V_{T-1}$. It is worth noting that condition~\eqref{nofeedback} is also required under Model B, as the exogeneity condition $\mathrm{Cov}(\nabla D_T,\eta_T\mid\bar X_{T-1})=0$ established for period $T$ in Proposition~\ref{prop:modelBiv} does not by itself extend to the lagged residual $\eta_{T-1}$. Without condition~\eqref{nofeedback} the TWFE projection coefficient would generally differ from $\delta(\bar X_{T-1})$ even when Model B holds for period $T$. Proposition~\ref{prop:doublerobust} formally states the DR result.

	\begin{proposition}(Double robustness of the TWFE moment condition for $\delta(\bar X_{T-1})$).\\
		\label{prop:doublerobust}
		Suppose $Y_{T-1}=\delta(\bar X_{T-1}) D_{T-1}+\eta_{T-1}$ and $Y_T=\delta(\bar X_{T-1}) D_T+\eta_T$ with a common $\delta(\bar X_{T-1})$ in periods $T-1$ and $T$, with $\nabla\eta_T := \eta_T-\eta_{T-1}$. Under condition~\eqref{nofeedback}, together with either of the following two sets of conditions, $\mathrm{Cov}(\nabla D_T,\nabla\eta_T \mid \bar{X}_{T-1}) = 0$, so that $\beta_{\text{TWFE}}(\bar X_{T-1})=\delta(\bar X_{T-1})$:
		\begin{enumerate}[(i)]
			\item Treatment equation~\eqref{restrictedmodel0} with $U$ additively separable, $\eta_T=f(U,\bar X_{T-1},V_T)$ as in~\eqref{restrictedmodel2}, and condition~\eqref{condind3}.
			\item Treatment equation~\eqref{restrictedmodel0gen}, outcome equation~\eqref{restrictedmodel2add} with $U$ additively separable, and condition~\eqref{condind3} restricted to its $V_T$-component, $V_{T} \perp \bar{W}_{T} \mid \bar{X}_{T-1}$ (implied by but weaker than the full condition~\eqref{condind3}).
		\end{enumerate}
	\end{proposition}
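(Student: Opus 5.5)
The plan is to write the TWFE projection coefficient as the treatment effect plus a bias term, and then show the bias term vanishes under each route. Fix $\bar X_{T-1}=\bar x$ throughout, so that $\delta(\bar x)$ is a constant. Differencing the two outcome equations gives the identity
\begin{align*}
\nabla Y_T=\delta(\bar x)\,\nabla D_T+\nabla\eta_T .
\end{align*}
It follows that
\begin{align*}
\beta_{\text{TWFE}}(\bar x)=\frac{\mathrm{Cov}(\nabla D_T,\nabla Y_T\mid\bar x)}{\mathrm{Var}(\nabla D_T\mid\bar x)}=\delta(\bar x)+\frac{\mathrm{Cov}(\nabla D_T,\nabla\eta_T\mid\bar x)}{\mathrm{Var}(\nabla D_T\mid\bar x)},
\end{align*}
provided $\mathrm{Var}(\nabla D_T\mid\bar x)>0$, which I will state as an implicit relevance requirement. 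Everything then reduces to showing $\mathrm{Cov}(\nabla D_T,\nabla\eta_T\mid\bar x)=0$. I split this covariance as $\mathrm{Cov}(\nabla D_T,\eta_T\mid\bar x)-\mathrm{Cov}(\nabla D_T,\eta_{T-1}\mid\bar x)$ and treat the two terms separately.

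\emph{Route (i).} The first term vanishes directly by Proposition~\ref{prop:modelBiv}: under \eqref{restrictedmodel0}, $\eta_T=f(U,\bar X_{T-1},V_T)$ and \eqref{condind3}, equation \eqref{treatsimp2} shows that $\nabla D_T$ is a function of $\bar W_T$ only given $\bar x$, and $\eta_T\perp\nabla D_T\mid\bar x$. For the lagged term, I write $\eta_{T-1}=f(U,\bar X_{T-2},V_{T-1})$. Given $\bar x$, this depends on $(U,V_{T-1})$, where $\bar X_{T-2}$ is fixed because it is contained in $\bar X_{T-1}$. I therefore need $\{U,V_{T-1}\}\perp\bar W_T\mid\bar x$.

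\emph{Route (ii).} Under \eqref{restrictedmodel2add}, $U$ cancels, so $\nabla\eta_T=g(\bar x,V_T)-g(\bar x_{T-2},V_{T-1})$. Meanwhile $\nabla D_T$ is now a function of $(\bar W_T,U)$ given $\bar x$, since $U$ no longer cancels from the treatment equation. The two pieces then follow from $V_T\perp\bar W_T$ and from \eqref{nofeedback}, respectively.

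The main obstacle is a mismatch between what the two terms need and what the stated conditions deliver. Condition \eqref{nofeedback} and the restricted form of \eqref{condind3} speak only about independence from $\bar W_T$. They deliver marginal independence of $V_T$ and of $V_{T-1}$ separately, not joint independence of $(U,V_{T-1})$ in route (i). In route (ii) they say nothing about dependence between $V$ and $U$, which now enters $\nabla D_T$. Two facts help:
\begin{itemize}
\item Covariance is additive, so $V_T$ and $V_{T-1}$ can be handled separately, with each marginal independence killing its own term.
\item In route (i), if $f$ is additively separable in $U$ and $V$, then \eqref{condind3} handles the $U$ part and \eqref{nofeedback} handles the $V_{T-1}$ part.
\end{itemize}
For general $f$, and for the $U$-dependence of $\nabla D_T$ in route (ii), I will read the conditions as joint statements, consistent with the paper's description of them as jointly implying strict exogeneity: $\{U,V_{T-1}\}\perp\bar W_T\mid\bar x$ in (i), and $V_T,V_{T-1}\perp(\bar W_T,U)\mid\bar x$ in (ii). I will state these readings explicitly as interpretations in the proof. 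If they are not granted, the fallback is additive separability of $f$ in $V$.
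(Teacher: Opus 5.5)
Your decomposition is the one the paper uses. You difference the two outcome equations, reduce $\beta_{\text{TWFE}}(\bar x)-\delta(\bar x)$ to $\mathrm{Cov}(\nabla D_T,\nabla\eta_T\mid\bar x)/\mathrm{Var}(\nabla D_T\mid\bar x)$, and make the covariance vanish through independence from $\bar W_T$.

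The mismatch you flag is real, and it is exactly where the paper's proof is loose. In route (i), the paper combines $\{U,V_T\}\perp\bar W_T\mid\bar X_{T-1}$ with $V_{T-1}\perp\bar W_T\mid\bar X_{T-1}$ into ``jointly, $\{U,V_{T-1},V_T\}\perp\bar W_T\mid\bar X_{T-1}$''. That does not follow. In route (ii), it claims that independence of $V_T,V_{T-1}$ from $\bar W_T$ suffices, even though $\nabla D_T$ depends on $U$ there.

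Neither step can be repaired under the stated conditions.
\begin{enumerate}[(i)]
\item Take no covariates, $U,V_{T-1},V_T$ i.i.d.\ uniform on $\{-1,1\}$, $W_{T-1}=UV_{T-1}$, $W_s\equiv0$ for $s\neq T-1$, $D_s=U+W_s$, and $f(u,v)=uv$. Then $(U,V_T)\perp W_{T-1}$ and $V_{T-1}\perp W_{T-1}$, so \eqref{condind3} and \eqref{nofeedback} hold. Yet $\nabla D_T=-W_{T-1}$ and $\nabla\eta_T=UV_T-W_{T-1}$, so $\mathrm{Cov}(\nabla D_T,\nabla\eta_T)=\mathrm{Var}(W_{T-1})=1$.
\item Take $U,W_{T-1},W_T$ independent and nondegenerate, $D_{T-1}=U+W_{T-1}$, $D_T=2U+W_T$, $g(\bar x,v)=v$, $V_T=U$, and $V_{T-1}=0$. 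Then $V_T\perp\bar W_T$ and \eqref{nofeedback} hold, but $\nabla\tilde\eta_T=U$ and $\mathrm{Cov}(\nabla D_T,\nabla\tilde\eta_T)=\mathrm{Var}(U)\neq0$.
\end{enumerate}

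So you should state $\{U,V_{T-1}\}\perp\bar W_T\mid\bar x$ in route (i), and $V_s\perp(\bar W_T,U)\mid\bar x$ for $s\in\{T-1,T\}$ in route (ii), as additional hypotheses, not as readings of the stated ones. With them your argument is complete. Because the target is only a covariance, route (i) then needs less than the full joint independence $\{U,V_{T-1},V_T\}\perp\bar W_T$ that the paper asserts.

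One correction to your plan: the fallback of additive separability of $f$ in $(U,V)$ rescues route (i) only. There, the $U$-part is handled by \eqref{condind3} and the $V_{T-1}$-part by \eqref{nofeedback}. In route (ii), $\nabla\tilde\eta_T$ is already additive in $V_T$ and $V_{T-1}$. The obstruction is dependence between $V_s$ and $U$ transmitted through $\nabla D_T$, which no separability of $g$ removes. The strengthened independence, or directly $\mathrm{Cov}(\nabla D_T,g(\cdot,V_s)\mid\bar x)=0$, is therefore indispensable in route (ii).
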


	\begin{proof}
		Write $\nabla\eta_T = \eta_T-\eta_{T-1}$. Under route $(i)$, $\eta_T=f(U,\bar X_{T-1},V_T)$ depends, conditional on $\bar X_{T-1}$, only on $(U,V_T)$, and $\eta_{T-1}=f(U,\bar X_{T-2},V_{T-1})$ depends, conditional on $\bar X_{T-1}$, only on $(U,V_{T-1})$. By condition~\eqref{condind3}, $\{U,V_T\}\perp\bar W_T \mid \bar X_{T-1}$, and condition~\eqref{nofeedback} additionally gives $V_{T-1}\perp\bar W_T\mid \bar X_{T-1}$. Jointly, $\{U,V_{T-1},V_T\}\perp\bar W_T\mid\bar X_{T-1}$, and since $\nabla D_T$ is a function of $\bar W_T$ alone given $\bar X_{T-1}$ (equation~\eqref{treatsimp2}), this gives $\{\eta_{T-1},\eta_T\}\perp\nabla D_T\mid\bar X_{T-1}$, hence $\nabla\eta_T\perp\nabla D_T\mid\bar X_{T-1}$ and in particular $\mathrm{Cov}(\nabla D_T,\nabla\eta_T\mid\bar X_{T-1})=0$.

		Under route $(ii)$, additive separability of $U$ in~\eqref{restrictedmodel2add} means $\nabla\tilde\eta_T = g(\bar X_{T-1},V_T)-g(\bar X_{T-2},V_{T-1})$ depends only on $(V_T,V_{T-1})$ given $\bar X_{T-1}$, and not on $U$. The cancellation of $U$ is exact and does not rely on any restriction on the treatment equation~\eqref{restrictedmodel0gen}, under which $\nabla D_T$ may depend on $U$. Independence of $\nabla\tilde\eta_T$ from $\nabla D_T$ therefore requires only $\{V_T,V_{T-1}\}\perp \bar W_T\mid \bar X_{T-1}$: the conditional independence of $V_T$ follows from the weaker $V_T\perp\bar W_T\mid\bar X_{T-1}$ stated in route (ii), and the conditional independence of $V_{T-1}$ follows from condition~\eqref{nofeedback}. Hence $\mathrm{Cov}(\nabla D_T,\nabla\tilde\eta_T\mid\bar X_{T-1})=0$.

		In both routes, $\mathrm{Cov}(\nabla D_T,\nabla Y_T\mid\bar X_{T-1}) = \delta(\bar X_{T-1})\,\mathrm{Var}(\nabla D_T\mid\bar X_{T-1}) + \mathrm{Cov}(\nabla D_T,\nabla\eta_T\mid\bar X_{T-1}) = \delta(\bar X_{T-1})\,\mathrm{Var}(\nabla D_T\mid\bar X_{T-1})$, so that $\beta_{\text{TWFE}}(\bar X_{T-1}) = \delta(\bar X_{T-1})$.
	\end{proof}

	The two routes in Proposition~\ref{prop:doublerobust} are non-nested, which constitutes the DR property of TWFE for the linear projection coefficient: additive separability of $U$ in the treatment equation~\eqref{restrictedmodel0} neither implies nor is implied by its additive separability in the outcome equation~\eqref{restrictedmodel2add}. Under route (ii), $U$ need not cancel from $\nabla D_T$; under route (i), $\eta_T$ need not take the additively separable form of~\eqref{restrictedmodel2add}. In both cases, what matters is that $U$ enters the relevant equation with the \emph{same} coefficient in periods $T-1$ and $T$, so that it cancels exactly under differencing. This structural double robustness is related to, but distinct from, the findings of \citet{arkhangelsky2022doubly} and \citet{arkhangelsky2021double}. Those papers distinguish a model-based path restricting the outcome equation and a design-based path restricting the treatment equation, and construct estimators that are consistent if either restriction holds. Our result differs in that our framework accommodates continuously varying rather than binary treatments, and that both routes here concern the linear projection of the outcome on the treatment, rather than a nonparametric conditional mean.

	By Proposition~\ref{prop:doublerobust}, the treatment effect may be heterogeneous across covariates, but not between the two time periods being differenced. The following example shows that this restriction across time cannot be dispensed with.
	\begin{example}[Failure of double robustness under a time-varying treatment effect]\label{ex:dr_fails}
		Let $U,W_0,W_1,V_0,V_1$ be mutually independent with $\mathrm{Var}(W_0)=\mathrm{Var}(W_1)=\sigma_W^2>0$, and define
		\begin{align}
			D_0=U+W_0,\quad D_1=U+W_1,\quad Y_0=D_0+U+V_0,\quad Y_1(d)=U+V_1\ \text{for all }d.
		\end{align}
		There are no dynamic treatment effects, $U$ is additively separable in both the treatment and the outcome equations, and the time-varying outcome shocks are independent of the treatment shocks. For every $a$, $Y_1(D_0+a)=U+V_1 \perp \nabla D_1=W_1-W_0$, so Assumption~\ref{CIAdD} holds. For every fixed current treatment level $d$, $\nabla Y_1(d) = Y_1(d) - Y_0 = -D_0+V_1-V_0$, which, conditional on $D_0$, is independent of $D_1$, so Assumption~\ref{CIAdY} also holds. Nevertheless,
		\begin{align}
			\nabla Y_1(a) = Y_1(D_0+a)-Y_0 = -D_0+V_1-V_0,
		\end{align}
		so that $\mathrm{Cov}(\nabla D_1,\nabla Y_1(a)) = -\mathrm{Cov}(W_1-W_0,D_0) = \mathrm{Var}(W_0) = \sigma_W^2 > 0$. The conditional-independence claim $\nabla Y_T(\nabla d) \perp \nabla D_T \mid \bar X_{T-1}$ therefore fails even though both Assumption~\ref{CIAdD} and Assumption~\ref{CIAdY} hold individually. The observed linear projection coefficient in this example is $\mathrm{Cov}(\nabla D_1,\nabla Y_1)/\mathrm{Var}(\nabla D_1) = \sigma_W^2/(2\sigma_W^2) = 1/2$, whereas the contemporaneous treatment effect in period 1 is zero, thus violating the homogeneity assumption of Proposition~\ref{prop:doublerobust}.
	\end{example}

	Finally, it is worth discussing why no IV-based ratio, of the kind used in Proposition~\ref{prop:modelBiv}, is needed for TWFE-based identification when the treatment effect is homogeneous (or linear) in the treatment level with a common slope $\delta(\bar X_{T-1})$ in both periods. Under homogeneity, $\nabla Y_T = \delta(\bar X_{T-1}) D_T + \eta_T - (\delta(\bar X_{T-1}) D_{T-1}+\eta_{T-1}) = \delta(\bar X_{T-1})\,\nabla D_T + \nabla\eta_T$: the prior level $D_{T-1}$ cancels algebraically from the regression equation itself, making the instrumental variable approach obsolete. This is different from the level regression of Proposition~\ref{prop:modelBiv}: $D_{T-1}$ remains in the equation $Y_T = \delta_T(\bar X_{T-1}) D_{T-1} + \delta_T(\bar X_{T-1})\,\nabla D_T + \eta_T$ and its correlation with $\nabla D_T$ is absorbed, rather than eliminated, by dividing by the first-stage covariance $\mathrm{Cov}(\nabla D_T,D_T\mid\bar X_{T-1})$. Once the response function relating the outcome to the treatment level is nonlinear, so that the treatment effect is heterogeneous in the treatment level, this algebraic cancellation no longer occurs for TWFE. For a nonlinear response function $m$, $Y_T(D_T)-Y_{T-1}(D_{T-1}) = m(D_T)-m(D_{T-1})$ depends on both endpoints $D_{T-1}$ and $D_T$ rather than on $\nabla D_T$ alone, so that the resulting TWFE coefficient is a weighted average of effects across units. The DR result of Proposition~\ref{prop:doublerobust} is therefore specific to effects that are linear in treatment levels.
	\section{Simulation Study}\label{sim}

	This section presents two simulation panels illustrating the non-nesting results of Section~\ref{sec:comparison}, the rescaling result of Section~\ref{sec:overid}, and the double robustness (DR) of TWFE established in Proposition~\ref{prop:doublerobust}. Throughout, all four estimators --- the OLS regression of $Y_1$ on $\nabla D_1$, the OLS regression of $\nabla Y_1$ on $D_1$ and $D_0$, the OLS regression of $Y_1$ on $D_1$, $D_0$, and $Y_0$, and TWFE --- are linear estimators, and the moment conditions they rely on are exactly the (IV), (D), and ($\nabla$Y) conditions of Section~\ref{sec:comparison}. We label the first of these estimators ``IV-$\nabla D$'' throughout this section, rather than ``CIA-$\nabla D$'', to reflect that what the simulation verifies is the linear moment condition~\eqref{eq:ivexog} of Proposition~\ref{prop:modelBiv} for instrument-based effect identification, not the full nonparametric conditional independence of Assumption~\ref{CIAdD}. Since the data-generating processes below have a single covariate $X_0$ and a treatment effect that does not vary with $X_0$, we write $\delta$ and $\lambda$ for the treatment effect $\delta_T(\bar x)$ and persistence factor $\lambda_T(\bar x)$ of Section~\ref{strucmod}, dropping the covariate argument, which is constant across $X_0$ throughout.

	Panel A uses the following data generating process (DGP) with independent period-specific unobservables:
	\begin{align*}
		&U,\, V_0,\, V_1,\, W_0,\, W_1,\, X_0 \;\sim\; N(0,1)\;\text{mutually independent},\\
		&D_0 = X_0 + W_0 + U,\quad Y_0 = D_0 + U + V_0,\\
		&D_1 = X_0 + W_1 + \alpha U,\quad Y_1 = D_1 + \gamma U + X_0 + V_1.
	\end{align*}
	We report results for four linear estimators: IV-$\nabla D$ (OLS of $Y_1$ on $\nabla D_1$ and $X_0$), ($\nabla$Y) (OLS of $\nabla Y_1$ on $D_1$, $D_0$, and $X_0$), (D) (OLS of $Y_1$ on $D_1$, $D_0$, $X_0$, and $Y_0$), and TWFE (OLS of $\nabla Y_1$ on $\nabla D_1$ and $X_0$).

	The moment condition (IV) requires the treatment change to be uncorrelated with the structural residual: $\nabla D_1 = W_1 - W_0 + (\alpha-1)U$ contains $U$ whenever $\alpha \neq 1$, but this only induces a nonzero covariance with the residual if $U$ also affects $Y_1$, i.e.\ if $\gamma \neq 0$. (IV) therefore holds iff $\alpha=1$ or $\gamma=0$. ($\nabla$Y) requires the treatment level to be uncorrelated with the residualized outcome change: $\nabla Y_1 - \delta(D_1-D_0) = (\gamma-1)U + V_1 - V_0$ correlates with $D_1$ whenever $\gamma \neq 1$ and $\alpha \neq 0$, so ($\nabla$Y) holds iff $\gamma=1$ or $\alpha=0$. (D) requires no residual correlation after conditioning on $Y_0$, $D_0$, and $X_0$: when $\gamma=0$, $U$ does not enter $Y_1$ so there is no correlation regardless of $\alpha$; when $\alpha=0$, $U$ does not enter $D_1$ so $D_1$ is exogenous regardless of $\gamma$. (D) therefore holds iff $\alpha=0$ or $\gamma=0$.

	TWFE's DR property rests on a different, and in general non-equivalent, pair of conditions. Route (i) of Proposition~\ref{prop:doublerobust} requires $U$ to enter the treatment equation with the \emph{same} coefficient in both periods --- here, period $0$'s coefficient is fixed at $1$, so route (i) requires $\alpha=1$, regardless of $\gamma$. Route (ii) requires $U$ to enter the outcome equation with the same coefficient in both periods --- here, requiring $\gamma=1$, regardless of $\alpha$. Note that $\delta=1$ in both periods throughout, so the homogeneity condition of Proposition~\ref{prop:doublerobust} is otherwise satisfied whenever one of these two conditions holds. The condition $\gamma=0$, which makes (IV) and (D) hold, does not by itself satisfy either route: it removes $U$ from the period-$1$ outcome residual entirely, rather than matching its coefficient to period $0$'s, where $U$ still enters with coefficient $1$.

	We consider six scenarios: A1 ($\alpha=1, \gamma=2$), in which (IV) holds ($\alpha=1$) but ($\nabla$Y) fails ($\gamma \neq 1$, $\alpha \neq 0$) and (D) fails ($\alpha \neq 0$, $\gamma \neq 0$); route (i) for TWFE's DR applies since $\alpha=1$. A2 ($\alpha=1, \gamma=0$), in which (IV) and (D) hold since $\alpha=1$ and $\gamma=0$ respectively, but ($\nabla$Y) fails since $\gamma \neq 1$ and $\alpha \neq 0$; route (i) for TWFE's DR again applies since $\alpha=1$. A3 ($\alpha=1, \gamma=1$), in which (IV) and ($\nabla$Y) hold simultaneously, and both DR routes apply. A4 ($\alpha=2, \gamma=1$), in which ($\nabla$Y) holds ($\gamma=1$) but (IV) fails ($\alpha \neq 1$, $\gamma \neq 0$) and (D) fails; route (i) for TWFE's DR fails since $\alpha\neq1$, but route (ii) applies since $\gamma=1$. A5 ($\alpha=2, \gamma=0$), in which (IV) and (D) hold since $\gamma=0$ eliminates $U$ from $Y_1$, but ($\nabla$Y) fails since $\gamma \neq 1$ and $\alpha \neq 0$; \emph{neither} DR route applies, since $\alpha\neq1$ and $\gamma\neq1$, even though (IV) holds. A6 ($\alpha=2, \gamma=2$), in which all three moment conditions are violated and neither DR route applies either.

	\begin{table}[ht]
		\centering
		\caption{Panel A simulation results (independent $W_0, W_1$).\label{tab:simA}}
		\small
		\setlength{\tabcolsep}{4pt}
		\begin{tabular}{rlcccccccc}
			\toprule
			& & \multicolumn{2}{c}{IV-$\nabla D$} & \multicolumn{2}{c}{($\nabla$Y)}
			& \multicolumn{2}{c}{(D)} & \multicolumn{2}{c}{TWFE} \\
			\cmidrule(lr){3-4}\cmidrule(lr){5-6}\cmidrule(lr){7-8}\cmidrule(lr){9-10}
			& Scenario & Bias & RMSE & Bias & RMSE & Bias & RMSE & Bias & RMSE \\
			\midrule
			\multicolumn{10}{l}{\textit{$n = 500$}} \\
			\midrule
			A1 & (IV) only, route (i) $(\alpha=1,\gamma=2)$
			& $0.01$ & $0.21$ & $0.33$ & $0.34$ & $0.50$ & $0.50$ & $0.00$ & $0.06$ \\
			A2 & (IV) \& (D), route (i) $(\alpha=1,\gamma=0)$
			& $-0.00$ & $0.10$ & $-0.34$ & $0.34$ & $-0.00$ & $0.04$ & $-0.00$ & $0.06$ \\
			A3 & (IV) \& ($\nabla$Y), both routes $(\alpha=1,\gamma=1)$
			& $-0.00$ & $0.14$ & $-0.00$ & $0.05$ & $0.25$ & $0.25$ & $0.00$ & $0.04$ \\
			A4 & ($\nabla$Y) only, route (ii) $(\alpha=2,\gamma=1)$
			& $0.33$ & $0.34$ & $0.00$ & $0.04$ & $0.29$ & $0.29$ & $0.00$ & $0.04$ \\
			A5 & (IV) \& (D), neither route $(\alpha=2,\gamma=0)$
			& $0.00$ & $0.04$ & $-0.34$ & $0.34$ & $0.00$ & $0.03$ & $-0.34$ & $0.34$ \\
			A6 & All invalid, neither route $(\alpha=2,\gamma=2)$
			& $0.67$ & $0.67$ & $0.33$ & $0.34$ & $0.57$ & $0.57$ & $0.33$ & $0.34$ \\
			\midrule
			\multicolumn{10}{l}{\textit{$n = 2{,}000$}} \\
			\midrule
			A1 & (IV) only, route (i) $(\alpha=1,\gamma=2)$
			& $-0.00$ & $0.10$ & $0.33$ & $0.33$ & $0.50$ & $0.50$ & $-0.00$ & $0.03$ \\
			A2 & (IV) \& (D), route (i) $(\alpha=1,\gamma=0)$
			& $0.00$ & $0.05$ & $-0.33$ & $0.34$ & $-0.00$ & $0.02$ & $0.00$ & $0.03$ \\
			A3 & (IV) \& ($\nabla$Y), both routes $(\alpha=1,\gamma=1)$
			& $0.00$ & $0.07$ & $-0.00$ & $0.03$ & $0.25$ & $0.25$ & $-0.00$ & $0.02$ \\
			A4 & ($\nabla$Y) only, route (ii) $(\alpha=2,\gamma=1)$
			& $0.33$ & $0.33$ & $0.00$ & $0.02$ & $0.29$ & $0.29$ & $0.00$ & $0.02$ \\
			A5 & (IV) \& (D), neither route $(\alpha=2,\gamma=0)$
			& $-0.00$ & $0.02$ & $-0.33$ & $0.33$ & $-0.00$ & $0.01$ & $-0.33$ & $0.33$ \\
			A6 & All invalid, neither route $(\alpha=2,\gamma=2)$
			& $0.66$ & $0.67$ & $0.33$ & $0.33$ & $0.57$ & $0.57$ & $0.33$ & $0.33$ \\
			\bottomrule
		\end{tabular}
		\medskip\\
		\begin{minipage}{\textwidth}
			\small\textit{Notes:}
			DGP: $D_0=X_0+W_0+U$, $D_1=X_0+W_1+\alpha U$,
			$Y_1=D_1+\gamma U+X_0+V_1$.
			(IV) valid iff $\alpha=1$ or $\gamma=0$; ($\nabla$Y) valid iff $\gamma=1$ or $\alpha=0$; (D) valid iff $\alpha=0$ or $\gamma=0$. TWFE is DR (Proposition~\ref{prop:doublerobust}) via route (i) iff $\alpha=1$ and via route (ii) iff $\gamma=1$. Neither route coincides exactly with (IV) or ($\nabla$Y). IV-$\nabla D$ bias and RMSE are computed on the coefficient rescaled by $1/\hat\lambda_1$, with $\hat\lambda_1$ estimated separately within each scenario, so that all four columns are reported on the common $\delta=1$ scale.
			Bias and root mean squared error (RMSE) are computed against a common target $\delta=1$ for all four estimators. For IV-$\nabla D$, the raw regression coefficient converges to $\delta\cdot\lambda_1$ rather than $\delta$. We therefore rescale it by $1/\hat\lambda_1$, estimated separately within each scenario, before computing bias and RMSE, exactly as in the empirical application below. This isolates the bias attributable to a violation of condition (IV) from the mechanical effect of $\lambda_1$ varying with $\alpha$ across scenarios.
		\end{minipage}
	\end{table}

	The probability limits against which bias is measured follow directly from equation~\eqref{eq:ivD_estimand}: under (IV), the IV-$\nabla D$ regression coefficient converges to $\delta\cdot\lambda_1(X_0)$, not to $\delta$ itself, while ($\nabla$Y), (D), and TWFE converge to $\delta$ directly when their respective moment conditions hold. We therefore rescale the IV-$\nabla D$ coefficient by $1/\hat\lambda_1$, exactly as in the rescaling discussed in Section~\ref{sec:overid} and applied in the empirical application below, and report bias for the rescaled estimate against the common target $\delta=1$, on the same footing as the other three estimators. The persistence factor $\lambda_1$ depends on $\alpha$ in this DGP: when $\alpha=1$, $\nabla D_1 = W_1 - W_0$ with $W_0$ and $W_1$ independent and of equal variance, so $\lambda_1 = \mathrm{Var}(W_1)/(\mathrm{Var}(W_1)+\mathrm{Var}(W_0)) = 1/2$; when $\alpha=2$, $\lambda_1=1$ instead. Rescaling by the scenario-specific $\hat\lambda_1$ in each case isolates the bias attributable to a genuine violation of condition (IV) from the mechanical effect of $\lambda_1$ differing from one, which would otherwise contaminate the comparison across scenarios with different values of $\alpha$. TWFE avoids any such rescaling because differencing both outcome and treatment yields $\nabla Y_1 = \delta\,\nabla D_1 + \nabla V_1$ whenever $\delta$ is homogeneous across periods, as it is throughout this DGP, so that the numerator and denominator of the TWFE regression coefficient cancel exactly to recover $\delta=1$ regardless of $\alpha$. TWFE bias, as well as ($\nabla$Y) and (D) biases, are reported against $\delta=1$.

	Table~\ref{tab:simA} reports bias and root mean squared error (RMSE) based on 1{,}000 simulations and two sample sizes, $n=500$ and $n=2{,}000$. For all consistent estimators, the RMSE approximately halves when the sample size quadruples, consistent with the expected $\sqrt{n}$ convergence rate, while the bias remains negligible across both sample sizes. Scenario A1 shows that (IV) does not imply ($\nabla$Y) or (D): the rescaled IV-$\nabla D$ estimate has a bias of $\approx 0$ against $\delta=1$, while ($\nabla$Y) has a persistent bias of $0.33$ and (D) has a bias of $0.50$. TWFE also has a bias of $\approx 0$ against $\delta=1$: this reflects route (i) for TWFE's DR, which applies because $\alpha=1$, rather than the moment condition (IV) as such.

	Scenarios A4 and A5 both have $\alpha=2$, so $\lambda_1=1$ in both, and the rescaled IV-$\nabla D$ estimate isolates the bias attributable to condition (IV) alone. At A5, where $\gamma=0$ makes (IV) hold, the rescaled bias is $\approx 0$, confirming that rescaling by the correct, scenario-specific $\hat\lambda_1$ removes any contamination from $\lambda_1$ differing from one half. At A4, where $\gamma=1\neq0$ makes (IV) fail, the rescaled bias is $0.33$, reflecting the genuine violation. The two scenarios diverge sharply for TWFE, however: it is unbiased at A4 but has a bias of $-0.34$ at A5, even though (IV) holds at A5 and fails at A4. The reason is that, in this DGP, $U$ enters the period-$0$ outcome with coefficient $1$, so route (ii) for TWFE's DR requires $\gamma=1$ specifically, not merely $\gamma\neq$ some other value: at A4, $\gamma=1$ matches this coefficient exactly, and TWFE is robust through route (ii); at A5, $\gamma=0$ instead removes $U$ from the period-$1$ residual without matching it to period $0$, so neither DR route applies, and TWFE is biased even though the single-period condition (IV) holds. Scenario A5 thus shows that satisfying (IV) alone --- a restriction on the period-$1$ residual only --- does not confer DR on TWFE when $U$'s coefficient in the treatment equation also changes across periods. Scenarios A2 and A3 confirm the remaining non-nesting directions of Proposition~\ref{prop:nonnest_moments}, and A6 confirms that all four estimators are biased when all moment conditions are violated simultaneously.

	Panel B uses a DGP in which the treatment shock follows a random walk and a dynamic treatment effect is present:
	\begin{align*}
		&U,\, V_0,\, V_1,\, W_0,\, \varepsilon_1,\, X_0  \;\sim\; N(0,1)\;\text{mutually independent},\\
		&D_0 = X_0 + U + W_0,\quad W_1 = W_0 + \varepsilon_1,\quad D_1 = X_0 + U + W_1,\\
		&Y_0 = D_0 + U + V_0,\quad Y_1 = D_1 + \phi D_0 + \gamma U + X_0 + V_1,
	\end{align*}
	where $\phi \geq 0$ controls the dynamic effect of $D_0$ on $Y_1$. This DGP satisfies the treatment equation of Model A, with $\nabla D_1 = \varepsilon_1$ always free of $U$ by the random-walk structure, so by Proposition~\ref{modelA_all} Assumptions~\ref{CIAdD}, \ref{CIAdY}, and~\ref{CIAD} all hold simultaneously, regardless of $\gamma$ or $\phi$, and TWFE inherits consistency. Since $\lambda_1=1$ here, IV-$\nabla D$ is not attenuated and all four estimators are reported against $\delta=1$. We consider two scenarios: B1 ($\gamma=1, \phi=0$), a baseline without dynamic effects and moderate confounding, in which all strategies are straightforwardly valid; and B2 ($\gamma=2, \phi=0.5$), which combines strong fixed-effect confounding with a dynamic treatment effect that would invalidate strategies under the Panel A DGP, but does not do so here because $\nabla D_1 = \varepsilon_1$ is a genuinely fresh innovation, independent of $U$ and of the entire treatment history.

	\begin{table}[ht]
		\centering
		\caption{Panel B simulation results (random walk of $W_T$, Model A).\label{tab:simB}}
		\small
		\setlength{\tabcolsep}{4pt}
		\begin{tabular}{rlcccccccc}
			\toprule
			& & \multicolumn{2}{c}{IV-$\nabla D$} & \multicolumn{2}{c}{($\nabla$Y)}
			& \multicolumn{2}{c}{(D)} & \multicolumn{2}{c}{TWFE} \\
			\cmidrule(lr){3-4}\cmidrule(lr){5-6}\cmidrule(lr){7-8}\cmidrule(lr){9-10}
			& Scenario & Bias & RMSE & Bias & RMSE & Bias & RMSE & Bias & RMSE \\
			\midrule
			\multicolumn{10}{l}{\textit{$n = 500$}} \\
			\midrule
			B1 & No dynamic effect $(\gamma=1,\phi=0)$
			& $-0.00$ & $0.11$ & $-0.00$ & $0.07$ & $-0.00$ & $0.05$ & $-0.00$ & $0.07$ \\
			B2 & Dynamic effect $(\gamma=2,\phi=0.5)$
			& $-0.00$ & $0.17$ & $-0.00$ & $0.07$ & $-0.00$ & $0.07$ & $-0.00$ & $0.09$ \\
			\midrule
			\multicolumn{10}{l}{\textit{$n = 2{,}000$}} \\
			\midrule
			B1 & No dynamic effect $(\gamma=1,\phi=0)$
			& $-0.00$ & $0.05$ & $-0.00$ & $0.03$ & $-0.00$ & $0.03$ & $-0.00$ & $0.03$ \\
			B2 & Dynamic effect $(\gamma=2,\phi=0.5)$
			& $-0.00$ & $0.09$ & $-0.00$ & $0.04$ & $-0.00$ & $0.03$ & $-0.00$ & $0.05$ \\
			\bottomrule
		\end{tabular}
		\medskip\\
		\begin{minipage}{\textwidth}
			\small\textit{Notes:}
			DGP: $D_0=X_0+U+W_0$, $W_1=W_0+\varepsilon_1$, $D_1=X_0+U+W_1$,
			$Y_1=D_1+\phi D_0+\gamma U+X_0+V_1$. Treatment equation satisfies Model A; Assumptions~\ref{CIAdD}, \ref{CIAdY}, and~\ref{CIAD} all hold simultaneously regardless of $\gamma$ or $\phi$ (Proposition~\ref{modelA_all}), and TWFE inherits consistency. Bias and root mean squared error (RMSE) compared against each estimator's probability limit, $\delta = 1$.
		\end{minipage}
	\end{table}

	Table~\ref{tab:simB} confirms Proposition~\ref{modelA_all}: all four estimators are unbiased in both scenarios, with RMSE again approximately halving when going from $n=500$ to $n=2{,}000$, consistent with $\sqrt{n}$ convergence. Scenario B2 is the key illustration: despite strong fixed-effect confounding and a dynamic treatment effect, all four methods remain consistent under the random-walk structure of Model A. The larger RMSE of IV-$\nabla D$ and TWFE relative to ($\nabla$Y) and (D) in B2 reflects that ($\nabla$Y) and (D) include $D_0$ as a regressor, directly controlling for the dynamic effect $\phi D_0$ on $Y_1$ and thereby reducing residual variance, whereas IV-$\nabla D$ and TWFE do not include $D_0$ separately so this variation remains in the residual.

	The contrast between Panels A and B illustrates the central message of Section~\ref{strucmod}: whenever the treatment process is not (close to) a random walk, as in Panel A whenever $\alpha\neq1$, the IV-$\nabla D$ estimator is informative about the structural parameter $\delta$ only after rescaling by the persistence factor, and it does not, in general, identify the same object as ($\nabla$Y), (D), or TWFE even when its own moment condition (IV) is satisfied. Only under the random-walk structure of Model A, illustrated in Panel B, do all four estimators target $\delta$ directly and simultaneously.
	\section{Empirical Application: Cigarette Demand}\label{app}

	To illustrate the identification strategies and the overidentification tests developed in Sections~\ref{setup}--\ref{sec:overid}, we estimate the price elasticity of cigarette demand using the panel data previously analyzed by \citet{baltagi1994} covering 46 US states over the period 1963--1992. The data set covers $1{,}380$ state-year observations (or $1{,}334$ after first-differencing) and is available in the \texttt{plm} package by \citet{croissant2008plm} for the statistical software \texttt{R}. The outcome $Y_{i,t}$ for individual $i$ in period $t$ is log per-capita cigarette sales (packs per year), the treatment $D_{i,t}$ is log real cigarette price (nominal price deflated by the CPI), and the covariates $X_{i,t}$ consist of log real per-capita disposable income and log population aged 16 and over. Cigarette demand is an interesting testing ground, as real prices vary substantially across states and over time owing to differences in state excise taxes, and the literature offers a well-established benchmark price elasticity of approximately $-0.3$ to $-0.5$, as discussed in \citet{chaloupkaWarner2000}.

	We implement four empirical strategies: the instrumental-variables estimator motivated by Proposition~\ref{prop:modelBiv}, which we label IV-$\nabla D$; the DiD-type estimator based on Assumption~\ref{CIAdY}, which we label ($\nabla$Y); the selection-on-observables estimator based on Assumption~\ref{CIAD}, which we label (D); and TWFE. In line with our previous discussion, for any strategy in which $D_{i,t}$ enters in levels, $D_{i,t-1}$ is added as an additional control variable, and whenever $Y_{i,t}$ enters in levels, $Y_{i,t-1}$ is added as an additional control variable. The continuous controls $X_{i,t}$ enter at both the current level $X_{i,t}$ and the lagged level $X_{i,t-1}$ as separate regressors. Year fixed effects are included in all four specifications. State fixed effects are never included explicitly, but are eliminated by outcome differencing when relying on ($\nabla$Y) and TWFE, while $Y_{i,t-1}$ is controlled for instead when relying on IV-$\nabla D$ and (D). Assuming linear outcome models, the four regression specifications are:
	\begin{itemize}
		\item ($\nabla$Y):\; $\nabla Y_{i,t} \sim D_{i,t} + D_{i,t-1} + X_{i,t} + X_{i,t-1} + \text{year FE}$
		\item IV-$\nabla D$:\; $Y_{i,t} \sim \nabla D_{i,t} + Y_{i,t-1} + X_{i,t} + X_{i,t-1} + \text{year FE}$
		\item (D):\; $Y_{i,t} \sim D_{i,t} + Y_{i,t-1} + D_{i,t-1} + X_{i,t} + X_{i,t-1} + \text{year FE}$
		\item TWFE:\; $\nabla Y_{i,t} \sim \nabla D_{i,t} + X_{i,t} + X_{i,t-1} + \text{year FE}$
	\end{itemize}

	As established in Section~\ref{sec:overid}, ($\nabla$Y), (D), and TWFE all estimate the homogeneous price elasticity $\delta$ directly under their respective moment conditions, while the IV-$\nabla D$ coefficient converges to $\delta\cdot\lambda$ and must be rescaled by the inverse of an estimate of the persistence factor $\lambda$, denoted by $\hat\lambda$, before overidentification tests can be run. Here $\delta$ and $\lambda$ denote the pooled, covariate-independent versions of $\delta_T(\bar x)$ and $\lambda_T(\bar x)$ from Section~\ref{strucmod}: the OLS specifications below impose a single price elasticity and a single persistence factor common to all states and years, rather than allowing either to vary with $\bar x$. We estimate $\hat\lambda$ as the pooled OLS coefficient from a regression of $D_{i,t}$ on $\nabla D_{i,t}$ after partialling out $Y_{i,t-1}$, $X_{i,t-1}$, and year fixed effects, matching the definition in equation~\eqref{eq:lambda_def} averaged over the joint distribution of states and years:
	\begin{align}
		\hat\lambda = \frac{\widehat{\mathrm{Cov}}(\widetilde{\nabla D}_{i,t},\, \widetilde{D}_{i,t})}{\widehat{\mathrm{Var}}(\widetilde{\nabla D}_{i,t})},
	\end{align}
	where tildes denote residuals after partialling out $Y_{i,t-1}$, $X_{i,t-1}$, and year fixed effects. The contemporaneous controls $X_{i,t}$ are excluded since they are measured after treatment assignment, but including them leaves $\hat\lambda$ unchanged at three decimal places. The standard error of $\hat\lambda$ is computed by block-bootstrapping states with 500 bootstrap samples, and the resulting uncertainty is propagated into the standard error of the rescaled IV-$\nabla D$ estimate via the delta method.

	In addition to OLS, we estimate treatment effects using the causal forest of \citet{WagerAthey2018} and \citet{AtheyTibshiraniWager2019}, which allows for heterogeneous treatment effects and nonparametric estimation. For ($\nabla$Y) and (D), the treatment enters in levels, and the causal forest directly estimates the conditional average marginal effect $\delta_t(\bar x)$ as defined in equation~\eqref{eq:ame}, if Assumptions~\ref{CIAD} and~\ref{CIAdY} are satisfied, respectively. For IV-$\nabla D$, Proposition~\ref{prop:modelBiv} guarantees that the ratio in equation~\eqref{eq:ivratio} recovers the price elasticity exactly when it is homogeneous, and Proposition~\ref{prop:doublerobust} establishes the analogous result for TWFE when the price elasticity takes the same value in both periods being differenced. Both results are specific to a price elasticity that does not vary with the price level itself. If, however, the price elasticity is heterogeneous in the price level, as permitted when applying the  causal forest, the IV-$\nabla D$ ratio identifies a weighted average of the marginal price response rather than $\delta_t(\bar x)$, as discussed in Section~\ref{strucmod}. Furthermore, TWFE's DR property no longer applies under effect heterogeneity, which needs to be borne in mind when interpreting the results further below. 	For IV-$\nabla D$, we train a causal forest with $Y_{i,t}$ as the outcome and $\nabla D_{i,t}$ as the treatment, conditional on pre-determined covariates $\bar{X}_{i,t-1}$ including year dummies, and a second causal forest with $D_{i,t}$ as the outcome and $\nabla D_{i,t}$ as the treatment to obtain unit-level estimates $\hat\lambda_t(\bar{x}_i)$, dividing the first by the second before averaging. For TWFE, we train a single causal forest with $\nabla Y_{i,t}$ as the outcome and $\nabla D_{i,t}$ as the treatment conditional on the covariates. We use the \texttt{grf} package of \citet{TibshiraniAtheyWager2020grf} with 2{,}000 honest trees and tuned hyperparameters throughout.

	\begin{table}[ht]
		\centering
		\caption{Cigarette Demand: Price Elasticity under Four
			Identification Strategies. Baltagi \& Levin (1986) panel, 46 US states $\times$ 29
			years (1964--1992), $n=1{,}334$.\label{tab:cigar}}
		\small
		\begin{tabular}{lcccc}
			\toprule
			& \multicolumn{2}{c}{OLS} & \multicolumn{2}{c}{Causal Forest} \\
			\cmidrule(lr){2-3}\cmidrule(lr){4-5}
			Strategy & ATE estimate & Standard error & ATE estimate & Standard error \\
			\midrule
			($\nabla$Y) &
			$-0.395^{***}$ & $(0.038)$ &
			$-0.326^{***}$ & $(0.030)$ \\[2pt]
			(D) &
			$-0.402^{***}$ & $(0.039)$ &
			$-0.386^{***}$ & $(0.037)$ \\[2pt]
			TWFE &
			$-0.380^{***}$ & $(0.039)$ &
			$-0.365^{***}$ & $(0.036)$ \\[2pt]
			IV-$\nabla D$ &
			$-0.684^{***}$ & $(0.078)$ &
			$-0.519^{***}$ & $(0.075)$ \\
			\midrule
			\multicolumn{5}{l}{\textit{Hausman overidentification tests (OLS $p$-value\;$|$\;Causal Forest $p$-value)}} \\[3pt]
			\multicolumn{1}{l}{($\nabla$Y) vs IV-$\nabla D$} &
			\multicolumn{2}{c}{$p<0.001^{***}$} &
			\multicolumn{2}{c}{$p=0.018^{*}$} \\
			\multicolumn{1}{l}{($\nabla$Y) vs (D)} &
			\multicolumn{2}{c}{$p=0.896$} &
			\multicolumn{2}{c}{$p=0.206$} \\
			\multicolumn{1}{l}{($\nabla$Y) vs TWFE} &
			\multicolumn{2}{c}{$p=0.793$} &
			\multicolumn{2}{c}{$p=0.408$} \\
			\multicolumn{1}{l}{IV-$\nabla D$ vs (D)} &
			\multicolumn{2}{c}{$p=0.001^{**}$} &
			\multicolumn{2}{c}{$p=0.115$} \\
			\multicolumn{1}{l}{IV-$\nabla D$ vs TWFE} &
			\multicolumn{2}{c}{$p<0.001^{***}$} &
			\multicolumn{2}{c}{$p=0.067$} \\
			\multicolumn{1}{l}{(D) vs TWFE} &
			\multicolumn{2}{c}{$p=0.695$} &
			\multicolumn{2}{c}{$p=0.684$} \\
			\bottomrule
		\end{tabular}
		\medskip\\
		\begin{minipage}{0.95\textwidth}
			\small\textit{Notes:} $Y_{i,t}$\,=\,log per-capita cigarette sales; $D_{i,t}$\,=\,log real cigarette price; $X_{i,t}$\,=\,log real disposable income, log population aged 16+. Controls enter at levels in both periods $t$ and $t-1$. All specifications include year fixed effects; state fixed effects are absorbed by outcome differencing for ($\nabla$Y) and TWFE. Standard errors clustered by state in parentheses. The OLS and causal forest coefficients for ($\nabla$Y) and (D) estimate $\delta$ directly. For IV-$\nabla D$, the raw OLS coefficient ($-0.378$) and causal forest ATE ($-0.387$) correspond to $\hat\delta \cdot \hat\lambda$ and are rescaled by $1/\hat\lambda = 1.81$ and $1/\hat\lambda_t(\bar{x}_i)$ respectively. The causal forest row for IV-$\nabla D$ identifies a weighted average of the marginal price response rather than the same conditional average marginal effect as ($\nabla$Y) and (D), since the random-walk property needed for the IV-$\nabla D$ estimand to coincide with the others is rejected by the data; TWFE's nonparametric estimate need not equal the conditional average marginal effect either once the price response is heterogeneous, though we do not characterize what it estimates in that case (see text). Hausman $p$-values from a two-sided $z$-test. $^{*}p<0.05$, $^{**}p<0.01$, $^{***}p<0.001$.
		\end{minipage}
	\end{table}

	Table~\ref{tab:cigar} reports the estimation results. The linear estimate of the persistence factor is $\hat\lambda = 0.552$ with a standard error (SE) of $0.021$ based on the block bootstrap, strongly rejecting the random-walk hypothesis $\lambda=1$ ($p<0.001$). This implies that Model A does not describe the price process, motivating the instrument-based approach underlying  IV-$\nabla D$, for which the rescaling factor is $1/\hat\lambda = 1.81$ for the linear case. 
	The OLS estimates based on ($\nabla$Y) ($\hat\delta=-0.395$, SE$\,=0.038$), (D) ($\hat\delta=-0.402$, SE$\,=0.039$), and TWFE ($\hat\delta=-0.380$, SE$\,=0.039$) are tightly grouped, while the rescaled IV-$\nabla D$ estimate ($\hat\delta=-0.684$, SE$\,=0.075$) lies approximately $0.29$ log-points above this cluster. Running Hausman overidentification tests reflects this finding: all three comparisons involving IV-$\nabla D$ reject ($p<0.001$ against ($\nabla$Y), $p=0.001$ against (D), $p<0.001$ against TWFE), while the comparisons ($\nabla$Y) vs.\ (D) ($p=0.896$), ($\nabla$Y) vs.\ TWFE ($p=0.793$), and (D) vs.\ TWFE ($p=0.695$) all fail to reject, confirming that the level-based strategies and TWFE agree closely. The causal forest estimates provide a more nuanced picture. The estimates based on ($\nabla$Y) ($\hat\delta=-0.326$, SE$\,=0.030$) and (D) ($\hat\delta=-0.386$, SE$\,=0.037$) remain close to each other, with the corresponding Hausman test failing to reject ($p=0.206$). The causal forest estimate for TWFE ($\hat\delta=-0.365$, SE$\,=0.036$) also falls in this range, and does not reject against either ($\nabla$Y) ($p=0.408$) or (D) ($p=0.684$). The rescaled nonparametric IV-$\nabla D$ estimate ($\hat\delta=-0.519$, SE$\,=0.075$) shifts somewhat toward ($\nabla$Y) and (D) relative to OLS, though ($\nabla$Y) vs.\ IV-$\nabla D$ continues to reject ($p=0.018$) and IV-$\nabla D$ vs.\ TWFE is borderline ($p=0.067$), while IV-$\nabla D$ vs.\ (D) no longer rejects ($p=0.115$).

	Taken together, these results indicate that the price-change-based instrumental-variables strategy tends to disagree with ($\nabla$Y) and (D) in this application: across both OLS and the causal forest, Hausman tests consistently find that the IV-$\nabla D$ estimate is distinct from those based on ($\nabla$Y) and (D), which agree closely with each other under both estimators, and largely with the OLS estimate of TWFE as well. In a linear world in which the price elasticity  does not vary with the price level, the gap between the rescaled linear IV-$\nabla D$ estimate and the other three strategies based on OLS indicates a violation of the IV-exogeneity condition~\eqref{eq:ivexog}. The comparison with TWFE is especially informative in light of Proposition~\ref{prop:doublerobust}, since the linear TWFE estimate remains valid whenever either the parallel-trends or the price-change exogeneity condition holds. A plausible source of such a violation is that states experiencing large year-to-year price increases differ systematically in ways that past outcomes and covariates do not fully capture, plausibly because tax-driven price changes reflect state-level political and fiscal conditions that can also be associated with cigarette consumption.  If the true model is not linear in the treatment level, the causal forest is arguably more appropriate as an estimation method due to its increased flexibility. When considering the causal forest results, the gap between IV-$\nabla D$ and ($\nabla$Y) and (D) narrows somewhat relative to OLS, yet remains statistically significant. In a nonlinear model, the difference between the IV-$\nabla D$ ratio and the level-based approaches can then either be driven by heterogeneity in the price elasticity across price levels, a violation of some identifying assumptions such as condition~\eqref{eq:ivexog}, or both. Without imposing linearity, we are therefore unable to rule out either explanation on the basis of these estimates.
    
	\section{Conclusion}\label{conc}

	This paper has clarified when exploiting treatment changes rather than levels identifies causal effects. The central difficulty is that a unit's treatment change is not assigned independently of its history: it builds on whatever treatment level the unit already had, and that level reflects the same forces that may also drive the outcome. Removing a fixed, time-constant confounder is not enough to rule this out, since the outcome can still depend on the unit's evolving treatment history through channels a simple differencing argument does not address.

	We show that a treatment process whose unexplained, time-varying component follows a random walk resolves this fully: because each period brings a genuinely new shock, unrelated to everything before it, the treatment change carries no information about the unit's history, and heterogeneous, dynamic treatment effects are identified without further restrictions on the outcome. A second model, which does not require the random walk but otherwise leaves the treatment process unrestricted, does not achieve this on its own, as we show by explicit example. Conditioning on the lagged treatment level does not repair it either. We show, however, that if dynamic treatment effects are ruled out and the treatment effect is taken to be constant given covariates, this second model still makes the treatment change a valid instrument for the treatment level, identifying that constant effect via the standard instrumental-variables ratio. The two models are not nested --- each relaxes one restriction while imposing another.

	The random walk permits identifying the conditional average marginal effect, for any pattern of treatment effect heterogeneity. The second model is built around the opposite case, a treatment effect taken to be constant given covariates, and its instrumental-variables route is designed for that case. If the treatment effect is in fact heterogeneous across treatment levels, this same route still identifies a weighted average of the marginal response, exactly as a conventional instrument would under treatment effect heterogeneity, but generally not the same conditional average marginal effect that a selection-on-observables or DiD strategy targets directly. TWFE does not escape this limitation either: its clean recovery of the conditional average marginal effect, established in this paper for the linear case, also relies on conditional effect homogeneity, and a nonparametric version of TWFE loses this property once treatment effects vary with the treatment level.

	We further show that, under the random walk, the treatment-change strategy coincides with two conventional strategies based on treatment levels --- selection-on-observables and difference-in-differences --- motivating Hausman-type overidentification tests, while at the level of the linear regression coefficients used in practice the three strategies remain genuinely non-nested. We establish a double-robustness (DR) property of the two-way fixed effects estimator for a treatment effect that is constant across periods: it remains consistent if the time-constant confounder enters either the treatment process or the outcome process in a way that cancels exactly under differencing. The first of these two conditions is exactly the one underlying the second model's instrumental-variables route described above; the second is the parallel-trends condition underlying difference-in-differences. Neither condition needs to hold on its own, but this DR property does not extend to more general models once treatment effects vary across periods and across treatment levels.

	These results matter for empirical practice. Researchers using treatment changes for identification should be explicit about which structural route they rely on, since the two are not interchangeable. A natural diagnostic is to test whether the treatment process lacks the persistence implied by a random walk. Rejecting this, as we find for cigarette prices in our empirical application, means the treatment-change estimator should be read through the instrumental-variables lens rather than as recovering the same effect as the other strategies. The overidentification tests developed here offer a further diagnostic, best combined with falsification tests on pre-treatment periods, and the DR property of TWFE provides some protection against misspecification when strategies disagree, though it does not substitute for direct examination of the underlying structural assumptions.
	
    \pagebreak
    \begin{appendices}
	\section{Decomposition Result}\label{app:decomposition}

	This appendix states and proves the decomposition result referred to in Section~\ref{strucmod}, which makes precise the algebraic relationship between Model A's random-walk property and Model B's instrumental-variables condition in the partially linear model~\eqref{partiallylinear}.

	Since $D_T = D_{T-1}+\nabla D_T$ by definition, we have the algebraic identity, for any $\bar x$ in the support of $\bar X_{T-1}$,
	\begin{align}\label{eq:identity}
		\mathrm{Cov}(\nabla D_T,D_{T-1}\mid\bar x) = \mathrm{Cov}(\nabla D_T,D_T\mid\bar x) - \mathrm{Var}(\nabla D_T\mid\bar x) = \big(\lambda_T(\bar x)-1\big)\,\mathrm{Var}(\nabla D_T\mid\bar x),
	\end{align}
	which holds without any assumption, purely as a consequence of the definitions of $\nabla D_T$ and $\lambda_T(\bar x)$.

	\begin{proposition}[Decomposition of the linear CIA-$\nabla D$ requirement]\label{prop:decomposition}
		Under the outcome model~\eqref{partiallylinear}, the following identity holds for every $\bar x$ and every $a$, without further assumptions:
		\begin{align}\label{eq:masteridentity}
			\mathrm{Cov}\big(\nabla D_T,\,Y_T(D_{T-1}+a)\mid \bar X_{T-1}=\bar x\big) = \delta_T(\bar x)\big(\lambda_T(\bar x)-1\big)\,\mathrm{Var}(\nabla D_T\mid \bar x) + \mathrm{Cov}(\nabla D_T,\eta_T\mid \bar x).
		\end{align}
		Consequently, Assumption~\ref{CIAdD} --- which requires the left-hand side of~\eqref{eq:masteridentity} to vanish for all $\bar x$ and $a$ --- implies, at every $\bar x$ with $\delta_T(\bar x)\neq0$ and $\mathrm{Var}(\nabla D_T\mid\bar x)>0$,
		\begin{align}\label{eq:balance}
			\mathrm{Cov}(\nabla D_T,\eta_T\mid\bar x) = \delta_T(\bar x)\big(1-\lambda_T(\bar x)\big)\,\mathrm{Var}(\nabla D_T\mid\bar x).
		\end{align}
	\end{proposition}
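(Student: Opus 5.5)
The plan is to write the potential outcome out explicitly under the partially linear model~\eqref{partiallylinear} and then expand the covariance by bilinearity. Fix $\bar X_{T-1}=\bar x$, so that $\delta_T(\bar x)$ is a constant. Under~\eqref{partiallylinear}, the potential outcome at the unit's own baseline plus a fixed increment $a$ is
\begin{align*}
Y_T(D_{T-1}+a) = \delta_T(\bar x)\,(D_{T-1}+a) + \eta_T .
\end{align*}
Here $\eta_T=f(U,\bar x,V_T)$ does not depend on the treatment argument, so it is the same residual that appears in the observed outcome. The constant $\delta_T(\bar x)a$ drops out of any conditional covariance. This gives
\begin{align*}
\mathrm{Cov}\big(\nabla D_T,Y_T(D_{T-1}+a)\mid\bar x\big)=\delta_T(\bar x)\,\mathrm{Cov}(\nabla D_T,D_{T-1}\mid\bar x)+\mathrm{Cov}(\nabla D_T,\eta_T\mid\bar x).
\end{align*}

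The second step substitutes the algebraic identity~\eqref{eq:identity}, $\mathrm{Cov}(\nabla D_T,D_{T-1}\mid\bar x)=(\lambda_T(\bar x)-1)\mathrm{Var}(\nabla D_T\mid\bar x)$. That identity follows from $D_{T-1}=D_T-\nabla D_T$ and the definition~\eqref{eq:lambda_def} of $\lambda_T(\bar x)$. The substitution yields~\eqref{eq:masteridentity} directly. No independence assumption enters, which is consistent with the statement holding "without further assumptions." The only implicit requirements are finite second moments and $\mathrm{Var}(\nabla D_T\mid\bar x)>0$, so that $\lambda_T(\bar x)$ is defined. I would note these at the outset.

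For the consequence, Assumption~\ref{CIAdD} gives $Y_T(D_{T-1}+a)\perp\nabla D_T\mid\bar X_{T-1}$. Under finite second moments, conditional independence implies zero conditional covariance, so the left-hand side of~\eqref{eq:masteridentity} vanishes. Rearranging then gives~\eqref{eq:balance}.

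There is one subtlety I would flag. CIA-$\nabla D$ is stated for $\nabla d$ in the support of $\nabla D_T$, whereas the identity holds for every $a$. This does not matter, because the left-hand side does not depend on $a$: a single support point suffices. The conditions $\delta_T(\bar x)\neq0$ and $\mathrm{Var}>0$ are needed only so that~\eqref{eq:balance} is a nontrivial restriction; the algebra does not use them.

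The main obstacle is conceptual rather than technical. One must be careful that $\eta_T$ in the potential outcome coincides with the observed residual. This requires that $\eta_T$ carries no dependence on $D_{T-1}$, which is exactly the no-dynamic-effects content built into~\eqref{partiallylinear}. I would make this point explicit before expanding the covariance.
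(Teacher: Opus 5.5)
Your proposal is correct and follows the paper's proof essentially step for step: expand $\mathrm{Cov}(\nabla D_T,\delta_T(\bar x)(D_{T-1}+a)+\eta_T\mid\bar x)$ by bilinearity, drop the constant $\delta_T(\bar x)a$, substitute identity~\eqref{eq:identity}, and then set the left-hand side to zero under Assumption~\ref{CIAdD} and rearrange. Your added remarks are accurate refinements that the paper leaves implicit: finite second moments, the fact that the left-hand side does not depend on $a$, and the requirement that $\eta_T$ carry no dependence on $D_{T-1}$.
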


	\begin{proof}
		Under~\eqref{partiallylinear}, $Y_T(D_{T-1}+a) = \delta_T(\bar x)(D_{T-1}+a) + \eta_T$. Conditional on $\bar X_{T-1}=\bar x$, the additive constant $\delta_T(\bar x)\,a$ does not affect the covariance, so
		\begin{align}
			\mathrm{Cov}\big(\nabla D_T,Y_T(D_{T-1}+a)\mid \bar x\big) = \delta_T(\bar x)\,\mathrm{Cov}(\nabla D_T,D_{T-1}\mid\bar x) + \mathrm{Cov}(\nabla D_T,\eta_T\mid\bar x).
		\end{align}
		Substituting~\eqref{eq:identity} for $\mathrm{Cov}(\nabla D_T,D_{T-1}\mid\bar x)$ gives~\eqref{eq:masteridentity}. Setting the left-hand side to zero, as required by Assumption~\ref{CIAdD}, and rearranging gives~\eqref{eq:balance}.
	\end{proof}

	Equation~\eqref{eq:balance} relates two quantities: the deviation of the persistence factor from one, $1-\lambda_T(\bar x)$, and the correlation of the treatment change with the structural residual, $\mathrm{Cov}(\nabla D_T,\eta_T\mid\bar x)$. It does not, by itself, force either quantity to vanish individually. In principle, a non-generic numerical coincidence between $\delta_T(\bar x)$, $\lambda_T(\bar x)$, and the distribution of the unobservables could satisfy~\eqref{eq:balance} with $\lambda_T(\bar x)\neq1$. As a numerical illustration, Example~\ref{ex:modelBfails}, with a covariate $X_0$ added for concreteness ($D_0=X_0+W_0+U$, $D_1=X_0+W_1+U$, $Y_1=D_1+\gamma U+X_0+V_1$, so that $\delta_1(\bar x)=1$ for all $\bar x$), satisfies $\mathrm{Cov}(\nabla D_1,\eta_1\mid X_0)=0$ exactly, but has $\lambda_1(X_0) = \mathrm{Var}(W_1)/(\mathrm{Var}(W_1)+\mathrm{Var}(W_0)) = 1/2$ when $W_0$ and $W_1$ have equal variance, so that equation~\eqref{eq:masteridentity} gives $\mathrm{Cov}(\nabla D_1,Y_1(D_0+a)\mid X_0) = 1\cdot(1/2-1)\cdot 2\sigma_W^2 + 0 = -\sigma_W^2 \neq 0$: Assumption~\ref{CIAdD} fails whenever $\lambda_1\neq1$, even though the instrumental-variables condition holds.

	What equation~\eqref{eq:balance} does show is that this requirement is satisfied for every admissible choice of $\delta_T(\cdot)$ and every joint distribution of the unobservables --- rather than relying on such a coincidence --- only if \emph{both} quantities vanish simultaneously: $\lambda_T(\bar x)=1$, the random-walk property of Model A, \emph{and} $\mathrm{Cov}(\nabla D_T,\eta_T\mid\bar x)=0$, the instrumental-variables condition of Model B. Model A guarantees both simultaneously, as a consequence of the full conditional independence in Proposition~\ref{prop:modelA}; Model B, on its own, guarantees only the second, and is therefore not, in general, sufficient for Assumption~\ref{CIAdD}, which is exactly Example~\ref{ex:modelBfails}'s point. The two models are accordingly not substitutable routes to Assumption~\ref{CIAdD}: only Model A delivers it, and Model B's contribution is one necessary, but on its own insufficient, ingredient of what the linear version of the assumption requires.

	\end{appendices}
	\pagebreak
	\bibliographystyle{apacite}
	\bibliography{refs}

\end{document}